\newenvironment{proofMath}{\par\addvspace\topsep\noindent{\bf Proof:} \ignorespaces }{}
\title{On Monotone Drawings of Trees\thanks{A preliminary version of this work 
	appeared in: Proc.\ 22nd International Symposium on Graph Drawing
	(GD'14)~\cite{kssw-omdt-14}. This research was
    supported by the ESF EuroGIGA project GraDR (DFG grant
    Wo~758/5-1).}}
\author{%
  Philipp~Kindermann\thanks{Lehrstuhl f\"ur Informatik I,
  Universit\"at W\"urzburg, Germany.
  \mbox{http://www1.informatik.uni-wuerzburg.de/en/staff}}
  \and
  Andr{\'e} Schulz\thanks{Institut f\"ur Mathematische Logik und
  Grundlagenforschung, Universit\"at M\"unster, Germany.
  \mbox{andre.schulz@uni-muenster.de}}
  \and
  Joachim Spoerhase\footnotemark[2]
  \and
  Alexander~Wolff\footnotemark[2]
}
\date{}
\theoremstyle{plain}
\newtheorem{theorem}{Theorem} 
\newtheorem{lemma}{Lemma}
\begin{document}
\maketitle

\begin{abstract}
  A crossing-free straight-line drawing of a graph is \emph{monotone} if
  there is a monotone path between any pair of vertices with respect
  to \emph{some} direction.  We show how to construct a monotone drawing 
  of a tree with $n$ vertices on an $O(n^{1.5}) \times O(n^{1.5})$ grid
  whose angles are close to the best possible angular resolution.  Our
  drawings are \emph{convex}, that is, if every edge to a leaf is
  substituted by a ray, the (unbounded) faces form convex regions.
  It is known that convex drawings are monotone and, in the case of
  trees, also crossing-free.

  A monotone drawing is \emph{strongly monotone} if, for every pair of
  vertices, the direction that witnesses the monotonicity comes from the
  vector that connects the two vertices.  We show that every tree admits
  a strongly monotone drawing.  For biconnected outerplanar graphs,
  this is easy to see.  On the other hand, we present a
  simply-connected graph that does not have a strongly monotone
  drawing in any embedding.
\end{abstract}

\section{Introduction}

A natural requirement for the layout of a connected graph is that
between any source vertex and any target vertex, there should be a
source--target path that approaches the target according to some
distance measure.  A large body of literature deals with problems of
this type; various measures have been studied.  For example, in a
\emph{greedy drawing} you can find a path to a target vertex by iteratively
selecting a neighbor that is closer to the target.  
In a \emph{monotone} drawing, the distance between vertices (on the desired
source-target path) is measured with respect to their projections on
\emph{some} line, which may be different for any source--target pair.
We say that a path~$P$ is \emph{monotone with respect to a vector~$\vec v$}
if the orthogonal projection of the vertices of~$P$ on every line with direction 
vector~$\vec v$ appears on the line in the order as induced by~$P$.
We also refer to~$\vec v$ as a \emph{direction}.
In \emph{strongly monotone} drawings, that line is always the line
from source to target, and in \emph{upward} drawings, the line is
always the vertical line, directed upwards.

In this paper, we focus on monotone and strongly monotone drawings of
trees with additional aesthetic properties such as convexity or small
area.  Given a tree, we call the 
edges incident to the leaves \emph{leaf edges} and all other edges
\emph{interior edges}.  Given a straight-line drawing of a tree, we
substitute each leaf edge by a ray whose initial part coincides with
the edge. The embedding of the tree in the plane defines a combinatorial embedding
of the tree, that is, the order of the edges around every vertex. The
faces are then specified by this combinatorial embedding as leaf--leaf paths. 
If the faces of the augmented drawing are realized as convex nonoverlapping
(unbounded) polygonal regions, then we call the original drawing a
\emph{convex drawing}.  If every region is \emph{strictly convex}
(that is, all interior angles are strictly less than~$\pi$), we also
call the drawing \emph{strictly convex}.  Note that a strictly convex
drawing is also monotone~\cite{acbfp-mdg-12,acm-mpaoa-SoCG89}, but a
monotone drawing is not necessarily convex.  Strict convexity forbids vertices 
of degree~2. When we talk about (strongly) monotone drawings, 
this always includes the planarity requirement.  Otherwise, as Angelini et
al.~\cite{acbfp-mdg-12} observed, drawing any spanning tree of the given
graph in a (strongly) monotone way and inserting the remaining edges
would yield a (strongly) monotone drawing of the graph.

\paragraph{Previous Work.}
Rao et al.~\cite{rpss-grwli-MOBICOM03} introduced the concept of 
greedy drawings for a coordinate-based routing algorithm that does not rely
on location information. While any 3-connected plane graph has a greedy drawing
in the Euclidean plane~\cite{lm-gems-10} (even without crossing~\cite{d-gdt-10}),
this is, unfortunately, not true for trees.  
N\"ollenburg and Prutkin~\cite{np-egdt-ESA13} gave a complete
characterization for the tree case, which shows that no tree with a
vertex of degree~6 or more admits a greedy drawing.  

Alamdari et al.~\cite{acglp-sag-GD12} studied a subclass of greedy drawings,
so-called \emph{self-approaching drawings} which require that there
always is a source--target path such that the distance decreases for
any triplet of intermediate points on the \emph{edges}, not only for
the vertices on the path.  These drawings are based on the concept of
self-approaching curves~\cite{ikl-sac-MPCPS95}. 

Carlson and Eppstein~\cite{ce-tcfoa-GD06} studied convex drawings of
trees.  They give linear-time algorithms that optimize the angular
resolution of the drawings, both for the fixed- and the
variable-embedding case.  They observe that convexity allows them to
pick edge lengths arbitrarily, without introducing crossings.

For monotone drawings, Angelini et al.~\cite{acbfp-mdg-12} studied the
variable-embed\-ding case.  They showed that any $n$-vertex tree admits
a straight-line monotone drawing on a grid of size $O(n^{1.6}) \times
O(n^{1.6})$ (using a BFS-based algorithm) or $O(n) \times O(n^2)$
(using a DFS-based algorithm).  They also showed that any biconnected
planar graph has a monotone drawing (using exponential area).
Further, they observed that not every planar graph admits a monotone
drawing if its embedding is fixed.  They introduced the concept of
\emph{strong monotonicity} and showed that there is a drawing of a
planar triangulation that is not strongly monotone.  Hossain and
Rahman~\cite{hr-mgdpg-FAW14} improve some of the results of Angelini
et al.\ by showing that every connected planar graph
admits a monotone drawing of size $O(n) \times O(n^2)$ and that such a
drawing can be computed in linear time. 

Both the BFS- and the DFS-based algorithms of Angelini et al.\
precompute a set of $n-1$ integral vectors in decreasing order 
of slope by using two different partial traversals of the
so-called \emph{Stern--Brocot tree}, an infinite tree whose vertices are in
bijection with the irreducible positive rational numbers.  Such
numbers can be seen as \emph{primitive} vectors in 2D, that is,
vectors with pairwise different slopes.  Then, both algorithms do a
pre-order traversal of the input tree.  Whenever they
hit a new edge, they assign to it the steepest unused vector.  They
place the root of the input tree at the origin and draw each
edge~$(u,v)$ by adding its assigned vector to the position
of~$u$.  They call such tree drawings \emph{slope-disjoint}.  We will not
formally define this notion here, but it is not hard to see that it
implies monotonicity.  

Angelini, with a different set of co-authors~\cite{adkmrsw-mdgfe-A15},
investigated the fixed-embedding case.  They showed that, on the
$O(n) \times O(n^2)$ grid, every connected plane graph admits a
monotone drawing with two bends per edge and any outerplane graph
admits a straight-line monotone drawing.

\paragraph{Our contribution.} 
We present two main results.  First, we show that any $n$-vertex tree
admits a convex and, hence, monotone drawing on the $O(n^{1.5}) \times
O(n^{1.5})$ grid (see Section~\ref{sec:griddrawing}).  As the drawings of
Angelini et al.~\cite{acbfp-mdg-12}, our drawings are slope-disjoint, but
we use a different set of primitive vectors (based on Farey sequences),
which slightly decreases the grid size and helps us achieve better angular resolution.  (This also works
for the BFS-based algorithm of Angelini et al.)  Instead of pre-order,
we use a kind of in-order traversal (first child~-- root~-- other 
children) of the input tree, which helps us to achieve convexity.
Our ideas can be applied to modify the (non-grid) optimal angular resolution 
algorithm of Carlson and Eppstein~\cite{ce-tcfoa-GD06} such that
a drawing on an $O(n^{1.5}) \times
O(n^{1.5})$ grid is constructed at the expense of missing the optimal
angular resolution by a constant factor. 

Second, we show that any tree admits a \emph{strongly} monotone
drawing (see Section~\ref{sec:strong}).  So far, no positive results
have been known for strongly monotone drawings.
In the case of proper binary trees, 
our drawings are additionally strictly convex.  For biconnected outerplanar
graphs, it is easy to construct strongly monotone drawings.  On the
other hand, we present a simply-connected planar
graph that does not have a strongly monotone drawing in any embedding.

\paragraph{Subsequent Work.} Subsequent to our work, He and He~\cite{hh-cmdt-COCOON15} 
improved the area bound for monotone drawings of trees to 
$O(n^{1.205}) \times O(n^{1.205})$.  Other than the drawings of
  our algorithm, their drawings are not necessarily convex.  Their
algorithm follows our approach  
of using Farey sequences to acquire a set of primitive vectors and then
computing a slope-disjoint
drawing. Recently, the same authors \cite{hh-nomdot-TCS16}
  further reduced the area bound to $O(n\log n)\times O(n\log n)$.
For triconnected planar graphs, He and He~\cite{hh-md3cp-ESA15} proved
that the convex drawings that an algorithm of
Felsner~\cite{f-cdpgo-O01} places on a grid of size $O(n) \times O(n)$
are even monotone.  Note, however, that augmenting a graph that is not triconnected to triconnectivity, running Felsner's algorithm and then
removing the additional edges will, in general, neither yield a convex
nor a monotone drawing.  Hence, the result for triconnected graphs
does not imply any improvement for trees.
Very recently, Felsner et al.~\cite{fikkms-smdpg-socg16} showed that
all 3-connected planar graphs, outerplanar graphs, and 2-trees admit a strongly
monotone drawing. Their algorithm for outerplanar graphs utilizes an alternate
proof that every tree admits a strongly monotone drawing, but they also achieve
convexity.

\section{Building Blocks: Primitive Vectors}
\label{sec:primitive}

The following algorithms require a set of integral vectors with distinct 
directed slopes and bounded length. In particular, we ask for a set of 
\emph{primitive vectors} $P_{d}=\{ (x,y) \mid \gcd(x,y) \in\{1,d\}, 0 \le x \le y \le d \}$.
Our goal is to find the right value of $d$ such that $P_{d}$ contains
at least $k$ primitive vectors, where $k$ is a number that we determine later. 
We can then use the reflections on the lines $x=y$, $y=0$ and $x=0$ to
get a sufficiently large set of integer vectors with  distinct
directed slopes.  The edges of the monotone drawings in
Section~\ref{sec:griddrawing} are translates of these vectors; each edge 
uses a different vector. 

\begin{figure}[tb]
  \centering
  \includegraphics{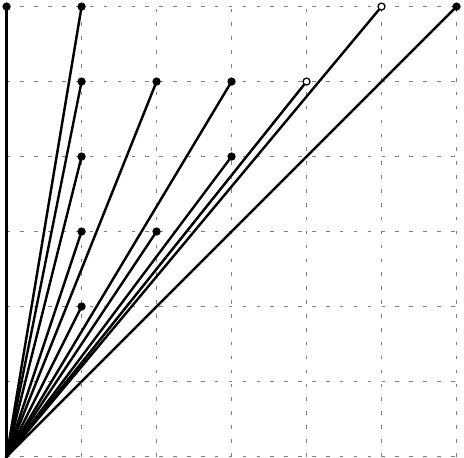}
  \caption{The 13 primitive vectors obtained from~$\mathcal{F}_6$. The
    smallest angle of~$\approx$~$1.14^\circ$ is realized between the
    vectors $(4,5)$ and $(5,6)$ marked with white dots; the best
    possible angular resolution in this case is $45^\circ/12 =
    3.75^\circ$. Note that our algorithm would use $\mathcal{F}_{12}$ to
    acquire~13 primitive vectors.}
  \label{fig:primitive}
\end{figure}

Assume that we have fixed $d$ and want to generate the set~$P_d$. If we
consider each entry $(x,y)$ of~$P_d$ to be a rational number $x/y$ and
order these numbers by value, we get the \emph{Farey sequence}~$\mathcal{F}_d$ 
(see, for example, Hardy and Wright's book~\cite{gw-itn-79}). 
The Farey sequence is well understood. In particular, it is known that
$|\mathcal{F}_d| = 3d^2/\pi^2 + O(d \log d)$~\cite[Theorem~331]{gw-itn-79}. 
Furthermore, the entries of~$\mathcal{F}_d$ can 
be computed in time $O(|\mathcal{F}_d|)$.  We remark that the set~$\bigcup_d \mathcal{F}_d$ coincides with the entries of the
Stern--Brocot tree.  However, collecting the latter level by level is
not the most effective method to build a set of primitive vectors for
our purpose. 

To obtain a set of~$k$ primitive vectors, we use the first~$k$
entries of the Farey sequence~$\mathcal{F}_d$, for~$d:= \lceil 
\pi^2 \sqrt k/3  \rceil$, 
replacing each rational by its corresponding two-dimensional vector.  
We select exactly~$k$ primitive vectors from this set which we denote by~$V_k$;
see Figure~\ref{fig:primitive}.

If we wish to have more control over the aspect ratio in our final drawing, we 
can pick a set of primitive vectors contained inside a triangle spanned
by the grid points~$(0,0)$, $(m_x,0), (m_x,m_y)$. By stretching the triangle and 
keeping its area fixed, we may end up with fewer primitive vectors.
This will result in an (only slightly) smaller constant compared to
the case $m_x=m_y$.  
As proven by B\'ar\'any and Rote~\cite[Theorem~2]{br-scdpg-06}, any such
triangular domain contains at least~$m_x m_y / 4$ primitive
vectors. This implies that we can adapt
the algorithm easily to control the
aspect ratio by selecting the box for the primitive vectors 
accordingly. For the sake of simplicity, we detail our algorithms only
for the most interesting case ($m_x=m_y$).

\begin{lemma}\label{lem:primitive_angles}
  Let $P\subseteq P_d$ be a set of $k=|P_d|/c$ primitive vectors with
  no coordinate greater than~$d$ for some constant~$c\ge 1$.  Then, any
  two primitive vectors of $P$ are separated by an angle of~$\Omega(1/k)$.
\end{lemma}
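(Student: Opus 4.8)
The plan is to bound the smallest angle between consecutive primitive vectors, since by convexity of the slope ordering the minimum over all pairs is achieved by an adjacent pair in the Farey ordering. So first I would reduce the claim to consecutive Farey neighbors: if $p/q$ and $p'/q'$ are adjacent in $\mathcal{F}_d$, the angle between the corresponding vectors $(q,p)$ and $(q',p')$ is what I must lower-bound, and the global minimum separation is the minimum of these consecutive gaps.

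Second, I would use the defining property of the Farey sequence: consecutive fractions $p/q < p'/q'$ in $\mathcal{F}_d$ satisfy $p'q - pq' = 1$. Geometrically this unimodularity means the parallelogram spanned by $(q,p)$ and $(q',p')$ has area exactly $1$, i.e.\ $q p' - p q' = 1$ (the determinant is $\pm 1$). The sine of the angle $\theta$ between the two vectors is $|{\det}|/(\lVert (q,p)\rVert \cdot \lVert (q',p')\rVert) = 1/(\lVert (q,p)\rVert\,\lVert (q',p')\rVert)$. Since every coordinate is at most $d$, each vector has length at most $\sqrt{2}\,d$, so $\sin\theta \ge 1/(2d^2)$, giving $\theta \ge \sin\theta \ge 1/(2d^2)$.

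Third, I would convert the bound in $d$ into a bound in $k$. From the excerpt we have $|\mathcal{F}_d| = 3d^2/\pi^2 + O(d\log d)$, so $d^2 = \Theta(|\mathcal{F}_d|) = \Theta(ck) = \Theta(k)$ for constant $c$. Hence $\theta = \Omega(1/d^2) = \Omega(1/k)$, which is the claimed bound. The argument is clean precisely because the unimodularity makes the cross product exactly $1$ rather than merely bounded, turning the angle estimate into a one-line consequence of the length bound.

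The main obstacle, such as it is, lies in the reduction in the first step: I must argue that the globally minimum angle between any two vectors of $P$ is attained by a Farey-consecutive pair. This holds because all the vectors lie in the first octant $0 \le x \le y$, so their slopes (equivalently, their polar angles) are all within a $45^\circ$ range and are monotonically ordered; the angular gap between two vectors equals the sum of the gaps of the consecutive pairs between them, so it is at least as large as each individual consecutive gap. One subtlety is that $P$ is a subset of $P_d$ of size $|P_d|/c$ rather than all of $P_d$, so passing to a subset only removes vectors and hence only enlarges consecutive gaps; thus the lower bound derived for consecutive neighbors in the full sequence $\mathcal{F}_d$ remains valid for $P$. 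The factor $c$ affects only the relation between $k$ and $d$, not the per-pair estimate, so it is absorbed into the asymptotic constant.
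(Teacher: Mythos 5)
Your proof is correct and takes essentially the same route as the paper's: both reduce to consecutive Farey neighbours, exploit the unimodular relation $qr-ps=1$ for adjacent fractions, bound the resulting denominator by $2d^2$, and convert via $d^2=\Theta(k)$. The only difference is cosmetic --- you work with $\sin\theta = 1/(\lVert v\rVert\,\lVert w\rVert)$ and the inequality $\theta\ge\sin\theta$, while the paper works with $\tan\phi = 1/(pr+qs)$ and a Taylor expansion of $\arctan$; your variant avoids that expansion but is not a different argument.
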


\begin{proofMath}
Since~$|P_d|=3d^2/\pi^2 + O(d \log d)$, we have that~$2d^2 \approx 2\pi^2 ck/3$.
Any line with slope~$m$ encloses an angle~$\alpha$ with the $x$-axis, 
such that~$\tan(\alpha)=m$. Let~$m_1$ and~$m_2$ be the slopes of two lines 
and let~$\alpha_1$ and~$\alpha_2$ be the corresponding angles with respect to 
the $x$-axis. By the trigonometric addition formulas we have that the 
separating angle~$\phi$ of these two lines is such that:
$$\tan \phi := \tan(\alpha_1-\alpha_2) 
= \frac{\tan\alpha_1-\tan\alpha_2}{1+ \tan \alpha_1 \tan \alpha_2} 
= \frac{m_1 - m_2}{1+m_1m_2}.$$ 
For any two neighboring entries~$p/q$ and~$r/s$ in 
the Farey sequence, it holds that $qr-ps=1$ \cite[Theorem~3.1.2]{gw-itn-79}, and 
therefore~$p/q$ and~$r/s$  differ by exactly~$(qr-ps)/(qs)=1/(qs)$.  Now assume that~$\phi$ is the angle 
between the vectors~$(p,q)$ and~$(r,s)$.  As a 
consequence, $\tan \phi = 1/(pr+qs)$. Then,~$\phi$ is minimized if~$pr+qs$
is maximized. Clearly, we have that~$pr+qs\le 2d^2\approx 2 \pi^2c k /3$. By 
the Taylor expansion, $\arctan(x)= x - x^2 \xi /(1+\xi^2)^2$ for some 
value~$0 \le \xi \le x$. Substituting~$x$ with~$3 /(2\pi^2 c k)$ yields, 
for~$k\ge2$, that
\[\phi \ge \frac{3}{2 \pi^2c k} - \frac{9 \xi}{4 \pi^4 c^2 k^2 (1+\xi^2)^2}
> \frac{3}{2 \pi^2 c k} - \frac{9}{ 4 \pi^4 c^2 k^2} \in \Omega(1/k).
  \tag*{$\Box$}\]
\end{proofMath}

Since the best possible resolution for a set of~$k$ primitive vectors 
is~$2\pi/k$, Lemma~\ref{lem:primitive_angles} shows that the
resolution of our set differs from the optimum by at most a constant.
To estimate this constant, let us assume we use~$k=|P_d|$ primitive vectors
(that is,~$c=1$ in Lemma~\ref{lem:primitive_angles}). Then, the smallest 
angle~$\phi$ spanned by these vectors is, according to the proof of the previous 
lemma, at least~$3/(2 \pi^2 k) - 9/(16 \pi^4)$ for any $k>1$. This value should
be compared to $\text{opt}=\pi/(4k)$ since the primitive vectors span an angle 
of~$\pi/4$ in total. We obtain that the ratio $\phi /\text{opt}$ is smaller than~6. 

\section{Monotone Grid Drawings with Large Angles}
\label{sec:griddrawing}

In this section, we present a simple method for drawing a
tree on a grid in a strictly convex, and therefore monotone way.
Lemma~\ref{lem:crossing-free} shows that this drawing is automatically
crossing-free. We name our strategy the \emph{inorder-algorithm}.
We start by ensuring that convex tree drawings are crossing-free.
This has already been stated (without proof) by Carlson and
Eppstein~\cite{ce-tcfoa-GD06}.
\begin{lemma} 
  \label{lem:crossing-free}
  Any convex straight-line drawing of a tree is crossing-free.
\end{lemma}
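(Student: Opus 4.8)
The plan is to argue by contradiction. Suppose the convex drawing is realized by pairwise non-overlapping convex regions $R_1,\dots,R_m$, one for each face of the augmented drawing, and yet two edges $e$ and $f$ cross at a point $p$ lying in the relative interior of both. The only geometric fact I really need is the following: if a convex region $R$ has a straight segment $s$ on its boundary and $p$ lies in the relative interior of $s$, then $R$ contains a small half-disk centred at $p$ on the side of $\mathrm{line}(s)$ that $R$ occupies. I would justify this from the supporting-line property of convex sets: $\mathrm{line}(s)$ supports $R$, so $R$ lies in one closed half-plane, and taking an interior point $z$ of $R$ together with two points of $s$ flanking $p$ yields a triangle contained in $R$ that already contains the required half-disk.

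Next I would localize the argument at the crossing. The edge $e$ lies on the boundary of some face $F$; let $R$ be the convex region realizing $F$, and let $h$ be the small open half-disk around $p$ on $R$'s side of $e$, so that $h\subseteq\mathrm{int}(R)$ by the fact above (this applies because $p$, being a crossing point, lies in the relative interior of the tree edge $e$, hence differs from the vertices that are its endpoints). Since $f$ crosses $e$ transversally at $p$, the edge $f$ has points strictly on both sides of $\mathrm{line}(e)$ arbitrarily close to $p$; in particular there is a point $x$ in the relative interior of $f$ with $x\in h\subseteq\mathrm{int}(R)$. On the other hand, $x$ is an interior point of the drawn edge $f$, and every edge (including a leaf edge, which is the initial part of a ray) appears on the boundary of the region realizing a face incident to it; hence $x\in\partial R_G$ for the region $R_G$ of some face $G$ incident to $f$.

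The contradiction now splits into two cases. If $G=F$, then $x\in\partial R$ and $x\in\mathrm{int}(R)$ simultaneously, which is impossible. If $G\neq F$, then $x\in\mathrm{int}(R)$ means a whole neighbourhood of $x$ lies in $R$, while $x\in\partial R_G$ forces every neighbourhood of $x$ to meet $\mathrm{int}(R_G)$; therefore $\mathrm{int}(R)$ and $\mathrm{int}(R_G)$ intersect, contradicting that the regions are non-overlapping. Either way we reach a contradiction, so no two edges cross.

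The step I expect to be the crux is the half-disk claim together with the verification that it genuinely applies at $p$: one must be certain that $p$ lies in the relative interior of the boundary segment of $R$ and that $R$ is two-dimensional, which is where convexity is doing the real work. The remaining effort is only bookkeeping about the augmented drawing, namely that each edge (and each leaf edge promoted to a ray) really does appear on the boundary of a realized face, which is immediate from the definition of the faces via the combinatorial embedding. It is worth noting that this argument never needs the two faces flanking an edge to be distinct, nor that the regions tile the plane; non-overlap of the convex regions together with the local half-disk is all that is used.
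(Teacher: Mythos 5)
Your proof is correct, but it follows a genuinely different route from the paper's. The paper argues globally and extremally: it takes a crossing whose enclosed region $A_q$ (bounded by the two tree paths from the lowest common ancestor $w$ down to the crossing point) has minimum area, then exhibits the face $f_q$ lying between the two offending subtrees; convexity makes the boundary chain of $f_q$ simple, which forces one of its two arms to enter and leave $A_q$ and thereby produce a crossing with a strictly smaller enclosed region, contradicting minimality. Notably, that argument uses only the convexity (hence simplicity) of each individual face boundary and never invokes the non-overlap of the face regions. You instead argue purely locally at the crossing point: convexity of a face region $R$ incident to $e$ yields an open half-disk of $\mathrm{int}(R)$ centred at $p$, the transversally crossing edge $f$ must enter that half-disk, and the resulting point lies simultaneously in the interior of one realized region and on the boundary of another (or the same), which collides either with $\mathrm{int}(R)\cap\partial R=\emptyset$ or with the disjointness of interiors. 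Your version is shorter and avoids the extremal and path-tracing bookkeeping, at the price of leaning on the non-overlap clause of the definition and on the fact (true, but worth a sentence) that each face region is two-dimensional, so that $\partial R_G\subseteq\overline{\mathrm{int}(R_G)}$ and the supporting-line/half-disk step applies; you flag both of these points yourself. Either argument establishes the lemma.
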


  \begin{figure}[t]
    \centering
    \includegraphics{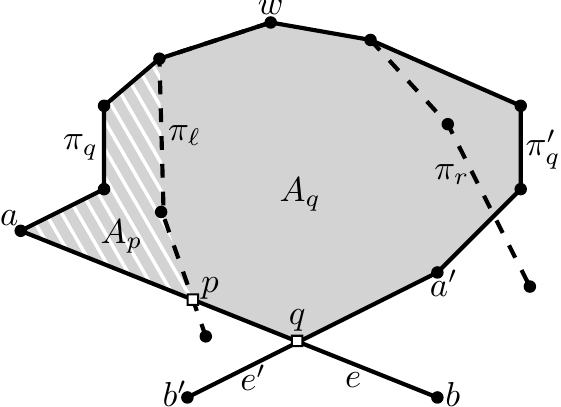}
    \caption{An illustration of the situation in the proof of
      Lemma~\ref{lem:crossing-free}} 
    \label{fig:crossingfree}
  \end{figure}

\begin{proof}
  Let $T$ be a tree and $\Gamma$ a convex straight-line drawing
  of~$T$.  Assume that two edges~$e=(a,b)$ and~$e'=(a',b')$ are
  crossing in $\Gamma$ in some point~$q$, see
  Figure~\ref{fig:crossingfree}.  Let~$w$ be the lowest common ancestor
  of~$b$ and~$b'$, let $\pi_q$ be the path~$w\to q$ via~$a$,
  and let~$\pi_q'$ the path~$w\to q$ via~$a'$. Let us assume
  that the children in~$w$ are ordered such that $\pi_q$ starts before
  $\pi_q'$.  Let~$A_q$ be the region bounded by~$\pi_q$ and~$\pi_q'$.

  We can assume that $A_q$ is of minimum area with respect to other
  crossings we may have chosen (and, hence, $A_q$ has a connected
  interior).  Now, we consider two paths starting from~$w$. The first
  one, $\pi_\ell$, starts with the first edge of $\pi_q$ and then
  always continues via the last child until it reaches a leaf. The
  second path, $\pi_r$, starts with the first edge of~$\pi_q'$ and
  continues always using the first child.  Note that the polygonal
  chain~$\pi_\ell$ together with~$\pi_r$ forms a face $f_q$ of the
  given convex drawing of the tree.  Hence, the face is convex, which
  means that~$\pi_\ell$ and~$\pi_r$ only meet in~$w$.  Furthermore, we
  either have~$\pi_\ell\not=\pi_q$ or we have~$\pi_r\not=\pi'_q$ since
  otherwise $f_q$ is self-intersecting.  As a consequence, at least
  one of the two paths, say $\pi_\ell$, enters and leaves~$A_q$.
  Let~$p$ be the point where~$\pi_\ell$ crosses~$\pi_q$ for the first
  time, and let~$A_p$ be the polygon that is bounded by the parts
  of~$\pi_q$ and~$\pi_\ell$ between~$w$ and~$p$.  Then~$A_p$ has
  smaller area than~$A_q$, which contradicts our assumption that $A_q$
  has minimum area.
\end{proof}

Our inorder-algorithm first computes a reasonable large set of primitive 
vectors, then selects a subset of these vectors, and finally assigns the slopes 
to the edges. The drawing is then generated by translating the selected 
primitive vectors. In the following, an \emph{extended}
subtree will refer to a subtree including the edge leading into the subtree
(if the subtree is not the whole tree).  

We will assign a unique number~$s(e)$ to every edge~$e$.  This number will
refer to the rank of the edge's slope (in circular order) in the final
assignment.  The rank assignment is done in a recursive fashion with increasing integral ranks from~$1$ to~$n-1$. Starting with the root, for each vertex~$v$,
 we first recursively visit its leftmost child, then assign the next rank
 to the parent edge of~$v$ (unless~$v$ is the root), and then recursively visit
 its other children from left to right. For an example of a tree with its edge 
 ranks, see Figure~\ref{fig:slopetree}.

\begin{figure}[tb]
  \begin{subfigure}[b]{.5\columnwidth}
    \centering
    \includegraphics{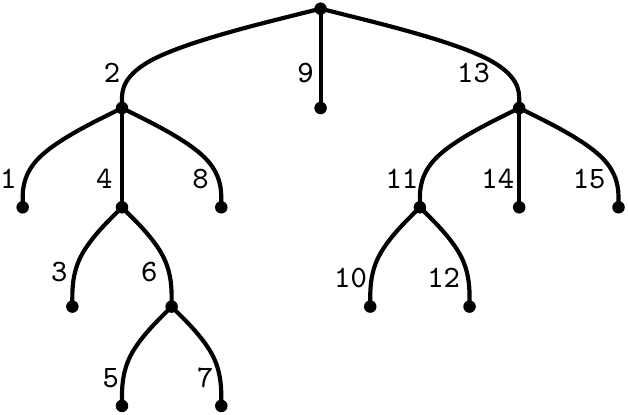}
    \caption{a tree with edge numbers~$s(\cdot)$ }
    \label{fig:slopetree}
  \end{subfigure}
  \hfill
  \begin{subfigure}[b]{.47\columnwidth}
    \centering
    \includegraphics{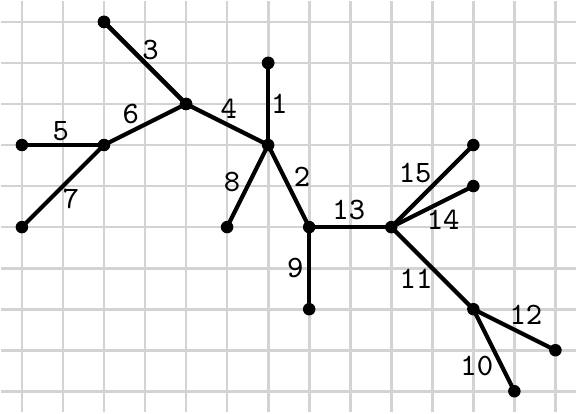}
    \caption{our grid drawing of the tree}
    \label{fig:slopetree-drawn}
  \end{subfigure}
  \caption{A strictly convex drawing of a tree}
\end{figure}

Second, we assign actual slopes to the edges.  Let~$e$ be an edge
with~$s(e)=j$.  Then, we assign some vector~$s_j\in\mathbb{Z}^2$ to~$e$ and 
draw~$e$ as a translate of~$s_j$.  We pick the
vectors~$s_1, s_2, \dots, s_{n-1}$ by selecting a sufficiently large
set of primitive vectors and their reflections in counterclockwise
order; see Section~\ref{sec:primitive}.  
Our drawing algorithm has the following requirements:

\begin{enumerate}[label=(R\arabic*),ref=R\arabic*]
\item \label{enum:root}%
  Edges that are incident to the root and consecutive in circular
  order are assigned to vectors that together span  an angle less than~$\pi$.
\item \label{enum:subtrees}%
  In every extended subtree hanging off the root, the edges 
  (including the edge incident to the root) 
  are assigned to a set of vectors that spans an angle less than~$\pi$.
\end{enumerate}

These requirements can always be fulfilled, as the following lemma shows.

\begin{lemma}\label{lem:slope-preselection}
  We can select $n-1$ vectors with distinct directed slopes 
  from a $[-d,d] \times [-d,d]$ grid with $d=4 \lceil\sqrt{n}\rceil$
  such that the requirements~\ref{enum:root}
  and~\ref{enum:subtrees} are fulfilled.
\end{lemma}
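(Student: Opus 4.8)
The plan is to exploit the one structural feature of the inorder ranking that matters here: because the traversal handles every extended subtree as one uninterrupted block, the ranks of the edges of any extended subtree form a \emph{contiguous} interval of $\{1,\dots,n-1\}$. In particular, if $c_1,\dots,c_m$ are the children of the root in left-to-right order, the extended subtrees $T_{c_1},\dots,T_{c_m}$ hanging off the root partition $\{1,\dots,n-1\}$ into consecutive blocks $B_1,\dots,B_m$ of sizes $n_1,\dots,n_m$ (with $\sum_i n_i = n-1$). Since rank~$j$ receives the $j$-th selected vector in counterclockwise order, I would reduce requirement~\ref{enum:subtrees} to a purely angular statement: it suffices to confine all vectors assigned to block~$B_i$ to a single angular cone of width strictly less than~$\pi$.

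Accordingly, I would choose $m$ pairwise disjoint cones $C_1,\dots,C_m$, listed in counterclockwise order to match the block order, each of width less than~$\pi$, so that $C_i$ contains at least~$n_i$ primitive vectors of the grid $[-d,d]^2$; I then assign those vectors, sorted by slope, to the ranks of~$B_i$. Arranging the cones in counterclockwise order guarantees that the global counterclockwise order of the chosen vectors agrees with the rank order, so the assignment is exactly the one the algorithm expects. Disjointness of the cones together with primitivity forces pairwise distinct directed slopes, and every~$T_{c_i}$ then lives inside the sub-$\pi$ cone~$C_i$, which is~\ref{enum:subtrees}. For~\ref{enum:root} I would place the cones so that the directions of the $m$ root edges $(r,c_1),\dots,(r,c_m)$, each lying in its own cone, have consecutive directions separated by less than~$\pi$; spreading the cones around the root gives consecutive gaps of about $2\pi/m<\pi$ once $m\ge 3$, and the degenerate cases $m=1$ (where \ref{enum:root} is vacuous and $T_{c_1}$ is the whole tree, placed in one sub-$\pi$ cone) and $m=2$ (keep both cones inside a common half-plane) I would dispatch by hand.

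The step that carries the real weight is the counting behind the second paragraph: I must certify that cones of total width at most~$2\pi$, each \emph{strictly} below~$\pi$, can simultaneously hold the prescribed numbers~$n_i$ of distinct-slope primitive vectors, the pressure being greatest when one subtree is large (or $m$ is small), since then a single cone must contain nearly~$n$ vectors while staying below width~$\pi$. This is precisely where the generous bound $d=4\lceil\sqrt n\rceil$ is used. By the estimate $|\mathcal{F}_d|=3d^2/\pi^2+O(d\log d)$ together with its reflections, even a single half-plane slice of $[-d,d]^2$ contains $\tfrac{12}{\pi^2}d^2+O(d\log d)=\Omega(n)$ primitive vectors, far more than~$n-1$; more sharply, the B\'ar\'any--Rote bound gives at least $d^2/4\ge 4n$ primitive vectors inside the grid triangle on $(0,0),(d,0),(d,d)$, a cone of width only~$\pi/4$. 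The latter already absorbs the worst case of a single subtree needing all $n-1$ vectors in one sub-$\pi$ cone, and reflecting this triangle into the other octants and quadrants supplies further disjoint cones in which to spread out several subtrees around the root. Thus every cone can be made narrow enough to keep its width below~$\pi$ while still meeting its quota; I would finish by checking that the constant~$4$ makes $d^2/4\ge 4n\ge n-1$ hold for all~$n$ (including the small values, where the $O(d\log d)$ error term must be absorbed) and by verifying the $m\le 2$ arrangements explicitly.
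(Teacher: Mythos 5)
Your ingredients match the paper's: the contiguity of the rank interval of each extended subtree, cones of width below $\pi$ per block of ranks, and the Farey/B\'ar\'any--Rote density bound showing that with $d=4\lceil\sqrt n\rceil$ a single octant of the grid already contains $d^2/4\ge 4n\ge n-1$ primitive vectors. The structural difference is that you allocate one cone per child of the root ($m$ cones for root degree $m$), whereas the paper first pads the tree with temporary leaf edges so that the extended subtrees can be grouped into exactly \emph{three} consecutive groups of exactly $n$ edges each, and then uses three fixed $\pi/4$-cones at $[0,\pi/4]$, $[3\pi/4,\pi]$ and $[3\pi/2,7\pi/4]$. With that device, requirement~\ref{enum:root} is immediate: every cone contains at least one root edge, and any two consecutive cones fit inside an open half-plane.

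The gap in your version is precisely the step that this padding trick is designed to trivialize, namely the verification of requirement~\ref{enum:root}. Your claim that ``spreading the cones around the root gives consecutive gaps of about $2\pi/m<\pi$'' is not established. The root edge of the $i$-th subtree can sit anywhere inside its rank block (its rank depends on the size of the leftmost grandchild subtree), so the quantity to bound is the counterclockwise angle from the start of $C_i$ to the end of $C_{i+1}$, including the wrap-around pair, and this must hold simultaneously with the quota ``$C_i$ contains at least $n_i$ primitive vectors'' for \emph{every} degree sequence $(n_1,\dots,n_m)$. When some $n_i$ is close to $n$, that cone must be a full octant; when $m>8$, you cannot have $m$ pairwise disjoint octants, and your only certified supply (the eight octant reflections of the B\'ar\'any--Rote triangle) gives no lower bound on the number of primitive vectors in narrower cones at arbitrary positions. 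None of this is unfixable---the total angular width actually required is a small fraction of $2\pi$---but your proposal defers exactly this simultaneous-satisfiability argument to ``finish by checking,'' and it is the substance of the lemma. (Also, your $m=2$ fallback of keeping both cones in a common half-plane cannot satisfy requirement~\ref{enum:root}, since the two circular gaps at a degree-2 root sum to $2\pi$; that case is excluded only because Theorem~\ref{thm:correctness} forbids degree-2 vertices.)
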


\begin{proof}
	We first preprocess our tree by adding temporary edges at some
	leaves. These edges will receive slopes, but are immediately discarded
	after the assignment.

	First, our objective is to ensure that the tree can be split up into three 
	parts that all have~$n$ edges. In particular, we adjust the sizes of the 
	extended subtrees hanging off the root by adding temporary edges such that we 
	can partition them into three sets of consecutive extended subtrees which all 
	contain~$n$ edges.  Note that we have to add~$2n+1$ edges to achieve this.

	Second, we define three cones~$C_1$, $C_2$, and~$C_3$; see 
	Figure~\ref{fig:slope-preselection}.  Each cone
	has its apex at the origin and spans an angle of~$\pi/4$.  The angular
	ranges are~$C_1=[0,\pi/4]$, $C_2=[3\pi/4,\pi]$, and $C_3=[3\pi/2,7\pi/4]$; 
	angles are measured from the $x$-axis pointing
	in positive direction.  Note that~$C_2$ is separated from the two
	other cones by an angle of~$\pi/2$.  As mentioned in
	Section~\ref{sec:primitive}, the set~$V_n$ contains~$n$ primitive
	vectors in the~$[0,d]\times[0,d]$ grid. 
	When reflected on the $x=y$ line, these vectors lie in~$C_1$. 
	Reflecting the vectors in~$C_1$, we further generate~$n$ vectors
	in~$C_2$ and~$n$ vectors in~$C_3$.  In every cone, we ``need'' at most~$n-3$ 
	edges. Hence, we can remove the vectors on the boundary of each cone.
	After removing the temporary edges, the number of vectors will drop from~$3n$ 
	to~$n-1$. 

	\begin{figure}[t]
		\centering
		\includegraphics{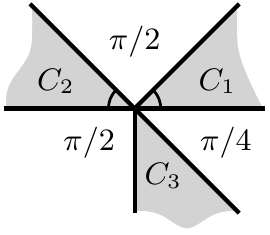}
		\caption{The cones that contain the slopes used in the algorithm}  
		\label{fig:slope-preselection}
	\end{figure}

	Now, we observe the following.  Every two consecutive edges incident to
	the root lie in the \emph{interiors} of our cones.  Given the sizes and angular 
	distances of the cones, this yields requirement~\ref{enum:root}.  
	Furthermore, any extended subtree is assigned slopes
	from a single cone.  This yields requirement~\ref{enum:subtrees}.
\end{proof}

For the example tree of Figure~\ref{fig:slopetree}, it suffices to pick
the~16 vectors that one gets from reflecting the primitive vectors
from the $[0,2] \times [0,2]$ grid.  These vectors already fulfill
requirements~\ref{enum:root} and~\ref{enum:subtrees}.  Hence, we
do not have to apply the more involved slope selection as described
in Lemma~\ref{lem:slope-preselection}.  The resulting drawing is shown
in Figure~\ref{fig:slopetree-drawn}.

Every face in the drawing contains two leaves. The leaves are ordered by 
their appearance in some DFS-sequence~$\mathcal{D}$ respecting some rooted 
combinatorial embedding of~$T$. For a face~$f$, we call the leaf that comes 
first in~$\mathcal{D}$ the \emph{left leaf} and the other leaf of~$f$  the 
\emph{right leaf} of~$f$. The only exception is the face whose leaves are the 
first and last child of~$\mathcal{D}$. Here, we call the first vertex 
in~$\mathcal{D}$ the right leaf and the last vertex in~$\mathcal{D}$ the left 
leaf.

\begin{lemma}\label{lem:inc_slopes}
  Let~$u$ be the left leaf, and let~$v$ be the right leaf of a face of~$T$. 
  Further, let~$w$ be the lowest common ancestor of~$u$ and~$v$. The above assignment 
  of slope ranks~$s$ to the tree edges implies the following.
  \begin{enumerate}[label=(a)]
		\item If edge~$e_1$ is on the path $w\to u$
			and edge~$e_2$ is on the path $w\to v$,
			then $s(e_1)<s(e_2)$.
		\item The ordered sequence of edges on the path~$w\to u$ is increasing in~$s(\cdot)$.
		\item The ordered sequence of edges on the path~$w\to v$ is decreasing in~$s(\cdot)$.
  \end{enumerate}
\end{lemma}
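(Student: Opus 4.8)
The plan is to read the three inequalities straight off the recursive (in-order) rank assignment, after first pinning down the combinatorial shape of the two paths that bound the face. The key preliminary observation is this: since $u$ and $v$ are the two leaves of one face, they are consecutive in the DFS order~$\mathcal{D}$; assume for now $u$ precedes~$v$ (the outer face, where this is reversed, is treated below). Let $c$ and $c'$ be the children of $w$ through which the paths to $u$ and~$v$ pass. Consecutiveness of $u,v$ in~$\mathcal{D}$ forces $c$ and $c'$ to be \emph{consecutive} children of~$w$ with $c$ before~$c'$; moreover $u$ must be the last (rightmost) leaf of the subtree rooted at~$c$ and $v$ the first (leftmost) leaf of the subtree rooted at~$c'$. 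Hence the path $w\to u$ descends to the \emph{rightmost} child at every vertex strictly below~$w$, while $w\to v$ descends to the \emph{leftmost} child at every such vertex. I record this once; all three parts then follow by locating, in the traversal, the moment each edge receives its rank.

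For part~(a) I would argue globally. The recursion processes the children of~$w$ from left to right (the leftmost one in its ``first child'' phase, the others in its ``other children'' phase), so every edge in the extended subtree of~$c$ is ranked before every edge in the extended subtree of~$c'$ (the parent edge of~$w$, if ranked in between, lies on neither path and is irrelevant). Since every edge of $w\to u$ lies in the extended subtree of~$c$ and every edge of $w\to v$ lies in the extended subtree of~$c'$, every rank on $w\to u$ is strictly smaller than every rank on $w\to v$, which is~(a).

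Parts~(b) and~(c) are local statements about consecutive edges. For two consecutive edges $(x,y)$ then $(y,z)$ on $w\to v$, the vertex~$z$ is the leftmost child of~$y$; by the rule ``first recurse into the leftmost child, then assign the parent edge of the current vertex,'' the edge $(y,z)$ is ranked \emph{inside} the leftmost-child recursion of~$y$, hence before $(x,y)$, which is ranked only at $y$'s own parent-edge step. So ranks strictly decrease along $w\to v$, giving~(c); this uses nothing about degrees. For $(x,y)$ then $(y,z)$ on $w\to u$, the vertex~$z$ is the rightmost child of~$y$; provided $y$ has at least two children, $z$ is handled in the ``other children'' phase, i.e.\ after $(x,y)$ is ranked, so $(y,z)$ receives the larger rank and ranks strictly increase, giving~(b).

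The one hypothesis I need is exactly where I expect the only real subtlety: in~(b), if a vertex on $w\to u$ had a \emph{single} child, that child would be simultaneously leftmost and rightmost and would be handled in the ``first child'' phase, reversing the inequality. This is precisely ruled out by strict convexity, which forbids degree-$2$ vertices, so every internal vertex below~$w$ has at least two children. The remaining special case is the outer face, where by definition the roles of left and right leaf are exchanged: there $w$ is the root, $u$ is its rightmost leaf and $v$ its leftmost leaf, the two bounding paths are the rightmost and leftmost root-to-leaf paths, and the arguments above apply verbatim with ``leftmost'' and ``rightmost'' swapped. The only bookkeeping point is that for this face (a) must be read in the \emph{circular} order of slopes that defines~$s$: the outer face is exactly the one straddling the wraparound from the largest rank back to the smallest, and after this wraparound the rightmost path again precedes the leftmost one, so (a)--(c) hold as stated. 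I expect this circular accounting for the outer face, together with the degree-$2$ exclusion in~(b), to be the only points requiring care; everything else is a direct transcription of the traversal order.
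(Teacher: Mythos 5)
Your proof is correct and follows essentially the same route as the paper's: it reads all three inequalities off the in-order traversal, using that the path $w\to u$ descends via rightmost children and $w\to v$ via leftmost children. You are in fact slightly more careful than the paper in two spots it glosses over --- the need to exclude degree-$2$ vertices for part~(b) and the circular reading of the ranks for the outer face.
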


\begin{proof}
  Let~$a$ be an edge that links the parent~$p$ to its child~$u$, 
  let~$b$ be the edge that links~$u$ to its leftmost child, and let~$c$ be
  the edge that links~$u$ to its rightmost child;
  see~Figure~\ref{fig:slopelem2}. In the assignment, we first picked
  the slope in the subtree rooted at the leftmost children of~$u$,
  then we selected the slope for~$a$, and later we picked the slopes
  for the subtree rooted at the rightmost children of~$u$. Since we
  select the slopes in their radial order, we have $s(b)< s(a) < s(c)$.

  Now, note that the slopes on the path~$w\to u$ have been assigned
  before the slopes on the path~$w\to v$, which proves~(a).  When
  traversing the path~$w\to u$, we follow the rightmost children,
  except maybe for~$w$'s child; see~Figure~\ref{fig:slopelem1}. Hence,
  the sequence of slopes is increasing, and~(b) follows. Statement~(c)
  follows by a similar argument: We traverse the path~$w\to v$ by
  taking the leftmost child, except maybe for~$w$'s child.  Hence, the
  sequence of slopes is decreasing.
\end{proof}

  \begin{figure}[tb]
    \hfill
    \begin{subfigure}[b]{.45\columnwidth}
      \centering
      \includegraphics{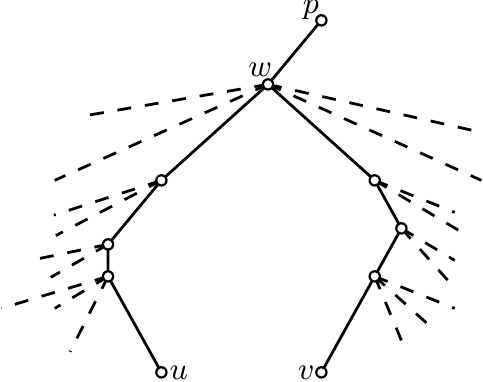}
      \caption{paths spanned by~$u,v,w$ }
      \label{fig:slopelem1}
    \end{subfigure}
    \hfill
    \begin{subfigure}[b]{.45\columnwidth}
      \centering
      \includegraphics{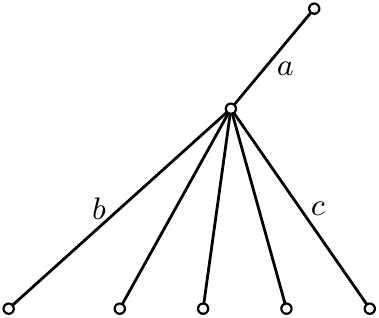}
      \caption{local assignments in the proof}
      \label{fig:slopelem2}
    \end{subfigure}
    \hfill
    \caption{Situation analyzed in Lemma~\ref{lem:inc_slopes}}
  \end{figure}

We now prove the correctness of our algorithm.

\begin{theorem}\label{thm:correctness}
  Given an embedded tree with $n$ vertices (none of degree~2), the 
	in\-order-algorithm produces a strictly convex and crossing-free drawing with
  angular resolution $\Omega(1/n)$ on a grid of size $O(n^{1.5})
  \times O(n^{1.5})$.  The algorithm runs in $O(n)$ time.
\end{theorem}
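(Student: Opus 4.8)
The plan is to establish each claimed property of the drawing in turn, leveraging the lemmas built up so far. The drawing is defined by the slope-rank assignment~$s(\cdot)$ together with the vector selection of Lemma~\ref{lem:slope-preselection}, and each edge~$e$ with $s(e)=j$ is drawn as a translate of the primitive vector~$s_j$. First I would verify the grid size: by Lemma~\ref{lem:slope-preselection} all vectors come from a $[-d,d]\times[-d,d]$ grid with $d=4\lceil\sqrt n\rceil=O(\sqrt n)$, and since the drawing is a sum of at most $n-1$ such vectors along any root-to-vertex path, every coordinate is bounded by $(n-1)\cdot d = O(n^{1.5})$. The angular resolution bound~$\Omega(1/n)$ follows directly from Lemma~\ref{lem:primitive_angles}: we use $k=\Theta(n)$ primitive vectors with no coordinate exceeding~$d$, so any two are separated by~$\Omega(1/n)$, and this carries over to the angles at each vertex.

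The heart of the argument is strict convexity, and here I would use Lemma~\ref{lem:inc_slopes}. The key observation is that a face of the augmented drawing is a leaf--leaf path passing through the lowest common ancestor~$w$ of its two leaves~$u,v$. Traversing the face boundary from the left leaf~$u$ up to~$w$ and down to the right leaf~$v$, parts~(b), (a), and~(c) of Lemma~\ref{lem:inc_slopes} together say that the slope ranks strictly increase along this entire walk (reading the $w\to u$ edges in reverse). Because the selected vectors are arranged in strictly increasing angular (circular) order as the rank increases, the turning direction along the face boundary is consistent, and requirements~\ref{enum:root} and~\ref{enum:subtrees} guarantee that the total angular span of the edges bounding any single face stays below~$\pi$. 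This prevents the boundary from wrapping around, so each face is realized as a strictly convex unbounded region with all interior angles strictly less than~$\pi$. Strict convexity is exactly what is needed to invoke the cited fact that strictly convex drawings are monotone, and crossing-freeness then comes for free from Lemma~\ref{lem:crossing-free}.

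The main obstacle I anticipate is the careful bookkeeping needed to turn ``slope ranks increase monotonically around the face'' into ``the polygonal region is genuinely convex and the faces do not overlap.'' One must check that the edges incident to the apex~$w$ of a face behave correctly (the $w\to u$ edge and the $w\to v$ edge are the two extreme slopes, with~$w$'s own incident edges handled by requirement~\ref{enum:root}), and that the two rays substituting for the leaf edges at~$u$ and~$v$ extend the convex chain without introducing a reflex angle. The special face whose leaves are the first and last vertices in the DFS order~$\mathcal{D}$ needs separate treatment, since its left/right roles are swapped; here requirement~\ref{enum:root} on the root's incident edges is what keeps the spanned angle below~$\pi$.

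Finally, for the running time I would note that the rank assignment is a single in-order traversal in $O(n)$ time, the primitive vectors from the Farey sequence~$\mathcal{F}_d$ are computable in $O(|\mathcal{F}_d|)=O(n)$ time as remarked in Section~\ref{sec:primitive}, and the coordinate of each vertex is obtained by summing vectors along its root path, which amounts to a second linear-time traversal. Assembling these four ingredients --- grid size, angular resolution, strict convexity (hence monotonicity) with crossing-freeness, and the time bound --- completes the proof.
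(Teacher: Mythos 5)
Your overall plan matches the paper's proof: the grid-size, angular-resolution, and running-time arguments are exactly the ones the paper uses, and you correctly identify Lemma~\ref{lem:inc_slopes} together with requirements~\ref{enum:root} and~\ref{enum:subtrees} as the source of convexity, with Lemma~\ref{lem:crossing-free} finishing off crossing-freeness. The gap lies at the one genuinely nontrivial step, which you yourself defer as ``the main obstacle I anticipate'': showing that the boundary walk of each face is a \emph{simple} chain, i.e., that its edges and rays do not intersect one another. Having every interior angle smaller than~$\pi$ does not by itself exclude a spiralling, self-overlapping boundary, and your justification --- that the bounded angular span ``prevents the boundary from wrapping around'' --- is an assertion of exactly the conclusion that needs proof. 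Moreover, the premise itself needs care: requirement~\ref{enum:subtrees} bounds the span of the \emph{assigned} vectors of an extended subtree, but along the face traversal $u\to w\to v$ the edges of the $w\to u$ half are reversed; to conclude that the traversal-directed vectors still fit in an open half-plane you must combine requirement~\ref{enum:subtrees} with part~(a) of Lemma~\ref{lem:inc_slopes} (all ranks on the left path precede all ranks on the right path), which your write-up does not do.

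The paper closes this step with a dedicated contradiction argument: if two edges or rays of one face met at a point~$x$, the two sub-paths from the lowest common ancestor~$t$ to~$x$ would bound a closed $k$-gon whose interior angles, by the slope monotonicity of Lemma~\ref{lem:inc_slopes} and the fact that the first and last traversal directions differ by less than~$\pi$, sum to strictly more than $(k-2)\pi$, which is impossible. Your intended route can be repaired into an equivalent (arguably cleaner) argument: once the traversal-directed edge vectors of a face are known to lie in an open half-plane, there is a direction~$\vec u$ with positive scalar product with every edge of the walk, so the face boundary is strictly monotone with respect to~$\vec u$ and hence simple --- the same computation as in Lemma~\ref{lem:between}. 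Either way, this step has to be carried out explicitly; everything else in your proposal is routine and agrees with the paper.
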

\begin{proof}
	We first show that no face in the drawing is incident to an angle larger 
	than~$\pi$. Let~$f$ be a face, let~$e$ and~$e'$ be two consecutive edges on 
	the boundary of~$f$, and let~$\alpha$ be the angle formed by~$e$ and~$e'$ in 
	the interior of~$f$.  If~$e$ and~$e'$ are incident to the root, 
	requirement~\ref{enum:root} implies~$\alpha<\pi$. If both edges contain the
	lowest common ancestor of the leafs belonging to~$f$, then, by 
	requirement~\ref{enum:subtrees}, also~$\alpha<\pi$. In the remaining 
	case,~$e$ and~$e'$ both lie on a path to the left leaf of~$f$, or both lie on 
	a path to the right leaf of~$f$. Let $v$ be the vertex shared by $e$ and $e'$.  At vertex~$v$, we have at least two outgoing 
	edges. Let~$e_1$ be the first outgoing edge and~$e_2$ be the last outgoing 
	edge at~$v$---one of the edges is~$e'$. By the selection of the slope ranks,
	we have $s(e_1)<s(e)<s(e_2)$. Consequently, the supporting line of~$e$ 
	separates~$e_1$ and~$e_2$, and hence both faces containing~$e$ have an angle 
	less than~$\pi$ at~$v$. Therefore, it holds that~$\alpha<\pi$.
   
	Next, we show that the edges and rays of a face do not intersect. Then, by 
	Lemma~\ref{lem:crossing-free}, no edges will cross. Assume that there are two 
	edges/rays~$\ell$ and~$r$ in a common face that intersect in some point~$x$. 
	Let~$t$ be the lowest common ancestor of~$\ell$ and~$r$, and assume 
	that~$\ell$ lies on the path to the left leaf and~$r$ on the path to the 
	right leaf. We define a closed polygonal chain~$P$ as follows. The chain 
	starts with the path~$t\to\ell$, continues via~$x$ to~$r$, and 
	finally returns to~$t$. We direct the edges according to this walk (for 
	measuring the directed slopes) and call them $e_1, e_2,\dots, e_k$. We may 
	assume that~$P$ is simple; otherwise, we find another intersection point.  By 
	Lemma~\ref{lem:inc_slopes}, the slopes are monotone when we traverse~$P$.  
	For~$i=1,\dots,k-1$, let~$\alpha_i$ be the difference between the directed 
	slopes of the edges~$e_i$ and~$e_{i+1}$. Then, the sum~$\sum_{i<k} \alpha_i$ 
	equals the angle between the slopes of~$e_1$ and~$e_k$.  Due to 
	requirement~\ref{enum:subtrees}, this angle is less than~$\pi$. 
	Let~$\beta_i=\pi-\alpha_i$ be the angle between~$e_i$ and~$e_{i+1}$ in~$P$,
	and let~$\beta_0>0$ be the ``interior'' angle at~$t$. We have that 
	$$\sum_{0\le i <k} \beta_i
	= \beta_0 + \sum_{1\le i <k} (\pi -\alpha_i) > 0 + (k-1)\pi - \pi 
	= (k-2) \pi.$$ 
	This, however, contradicts the fact that the angle sum of the polygon with 
	boundary~$P$ is~$(k-2)\pi$. Thus, our assumption that two edges/rays cross was
	invalid.
   
  Since the drawing is assembled from~$n-1$ integral vectors whose absolute coordinates
  are at most~$O(\sqrt{n})$, the complete drawing uses a grid of dimension
  $O(n^{1.5})\times O(n^{1.5})$.
  Since all vectors are reflections of (a subset of) vectors defined by a Farey
  sequence with at most~$n$ entries, Lemma~\ref{lem:primitive_angles}
  yields that the angular resolution is bounded by~$\Omega(1/n)$. 
\end{proof}

We conclude this section with comparing our result with the drawing algorithm 
of Carlson and Eppstein~\cite{ce-tcfoa-GD06}. Their algorithm 
produces a drawing with optimal angular resolution. It draws trees 
convex, but, in contrast to our algorithm, not necessarily strictly
convex.  Allowing parallel leaf edges can have a great impact on the
angular resolution.  However, our ideas can be applied to modify the
algorithm of Carlson and Eppstein.  For the leaf edges, their algorithm uses  
a set of~$k$ slopes and picks the slopes such that they are separated by an 
angle of~$2\pi /k$. The slopes of interior edges have either one of the slopes
of the leaf edges, or are chosen such that they bisect the 
wedge spanned by their outermost child edges. However, it suffices to
assure that the slope of an interior edge differs from the 
extreme slopes in the following subtree by at least~$2 \pi / (2k)$. 

We can now modify the algorithm as follows.
We pick~$2k/8$ primitive vectors and reflect them such that
they fill the whole angular space with~$2k$ distinct integral vectors. 
We use every other vector of this set for the leaf edges.
For an interior edge, we take any vector from our preselected set 
whose slope lies in between the extreme slopes of the edges in its subtree. 
Since we have sufficiently spaced out our set of primitive vectors, we can 
always find such a vector. Thus, we obtain a drawing on the
$O(n^{1.5}) \times O(n^{1.5})$ grid. Clearly, the drawing does not have optimal 
angular resolution. However, since we use~$2k$ integral vectors, which have by 
Lemma~\ref{lem:primitive_angles} an angular resolution of~$\Omega(1/k)$,
we differ from the best possible angular resolution~$2\pi / k$ only by 
a constant factor.  Note that the drawings produced by the algorithm of 
Carlson and Eppstein do not lie on the grid, that is, they do not compute 
rational coordinates for the vertices (by design, since otherwise perfect 
angular resolution cannot be achieved).

\begin{figure}[tb]
  \centering
  \begin{subfigure}[b]{.44\columnwidth}
    \centering
    \includegraphics[page=1]{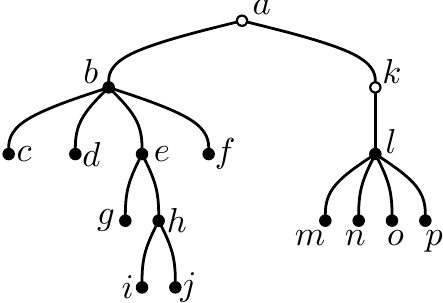}
    \caption{tree with two dummy vertices}
    \label{fig:example-tree-1}
  \end{subfigure}
  \hfill
  \begin{subfigure}[b]{.44\columnwidth}
    \centering
     \includegraphics{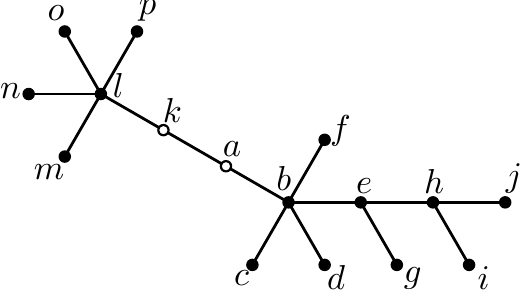}
    \caption{drawing by Carlson and Eppstein~\cite{ce-tcfoa-GD06}}
    \label{fig:example-tree-3}
  \end{subfigure}

  \medskip
  \begin{subfigure}[b]{.44\columnwidth}
    \centering
    \includegraphics[page=3]{angelini-tree}
    \caption{drawing by Angelini et al.~\cite{acbfp-mdg-12}}
    \label{fig:example-tree-2}
  \end{subfigure}
  \hfill
  \begin{subfigure}[b]{.44\columnwidth}
    \centering
    \includegraphics[page=2]{angelini-tree}
    \caption{drawing by our algorithm}
    \label{fig:example-tree-4}
  \end{subfigure}

  \caption{Example tree of Angelini et al.~\cite{acbfp-mdg-12}, drawn by
    various algorithms. We treat the degree-2 vertices as dummy
    vertices because of the degree restriction.}
  \label{fig:example}
\end{figure}

To conclude this section, we present an example that compares our
approach to that of Carlson and Eppstein~\cite{ce-tcfoa-GD06} and to
that of Angelini et al.~\cite{acbfp-mdg-12}; see
Figure~\ref{fig:example}.

\section{Strongly Monotone Drawings}
\label{sec:strong}

In this section, we show how to draw trees in a strongly monotone fashion.
We first show how to draw any proper binary tree, that is, any
internal vertex has exactly two children.  We name our strategy
the \emph{disk algorithm}.  Then, we generalize our
result to arbitrary trees. Further, we show that connected planar
graphs do not necessarily have a strongly monotone drawing.  Finally,
we show how to construct strongly monotone drawings for biconnected outerplanar graphs. 

Let~$T$ be a proper binary tree, let~$D$ be any disk with center~$c$,
and let~$C$ be the boundary of~$D$.  Recall that a strictly convex
drawing cannot have a vertex of degree~2.  Thus, we consider the root
of~$T$ a dummy vertex and ensure that the angle at the root
is~$\pi$.  We draw~$T$ inside~$D$.  We start by  
mapping the root of~$T$ to~$c$.  Then, we draw a horizontal line~$h$
through~$c$ and place the children of the root on~$h \cap
\text{int}(D)$ such that they lie on opposite sites relative to~$c$.
We cut off two circular segments by dissecting~$D$ with two vertical
lines running through points representing the children of the root.
We inductively draw the right subtree of~$T$ into the right circular
segment and the left subtree into the left circular segment.

In any step of the inductive process, we are given a vertex~$v$
of~$T$, its position in~$D$ (which we also denote by~$v$) and a 
circular segment~$D_v$; see Figure~\ref{fig:construction-ind}.  
The preconditions for our construction are that
\begin{enumerate}[label=(\roman*)]
  \item $v$ lies in the relative interior of the chord~$s_v$ that 
    delimits~$D_v$, and
  \item $D_v$ is empty, that is, the interiors of~$D_v$ and~$D_u$ are disjoint 
    for any vertex~$u$ that does not lie on a root--leaf path through~$v$.
\end{enumerate}
In order to place the two children~$l$ and~$r$ of~$v$ (if any), we
shoot a ray~$\vec{v}$ from~$v$ perpendicular to~$s_v$ into~$D_v$.
Let~$v'$ be the point where~$\vec{v}$ hits~$C$.  Consider the chords
that connect the endpoints of~$s_v$ to~$v'$.  The chords and~$s_v$ form
a triangle with height~$vv'$.  The ``height'' is contained in the interior of the 
triangle and splits it into two right subtriangles.  The chords are the
hypotenuses of the subtriangles.  We construct~$l$ and~$r$ by connecting~$v$ to
these chords perpendicularly.  Note that, since the subtriangles are right 
triangles, the heights lie inside the subtriangles.  Hence,~$l$ and~$r$ lie in 
the relative interiors of the chords.  Further, note that the circular 
segments~$D_l$ and~$D_r$ delimited by the two chords are disjoint and
both are contained in~$D_v$.  Hence,~$D_l$ and~$D_r$ are empty, and the
preconditions for applying the above inductive process to~$r$ and~$l$
with~$D_l$ and~$D_r$ are fulfilled.  See Figure~\ref{fig:construction-final} for 
the output of our algorithm for a tree of height~3.

Note that our algorithm does not place the vertices on a grid. However, 
no edge on a strongly monotone path is perpendicular to its monotone direction.
Hence, the vertices can be moved slightly to rational coordinates.
Further, the drawings computed by our algorithms require more than polynomial 
area; in fact, they even require super-exponential area, as the ratio 
between~$s_v$ and~$s_l$ in the inductive step depicted in 
Figure~\ref{fig:construction-ind} cannot be bounded by a constant.
N\"ollenburg et al.~\cite{npr-saicd-GD14} have recently shown that 
exponential area is required for strongly monotone drawings
of trees, which justifies that we cannot produce a drawing on a grid of
polynomial size.

\begin{figure}[tb]
  \begin{subfigure}[b]{.38\columnwidth}
    \centering
    \includegraphics{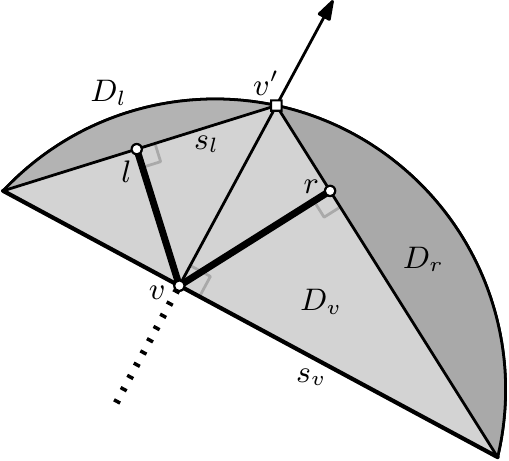}
    \caption{sketch of the inductive step}
    \label{fig:construction-ind}
  \end{subfigure}
  \hfill
  \begin{subfigure}[b]{.58\columnwidth}
    \centering
    \includegraphics{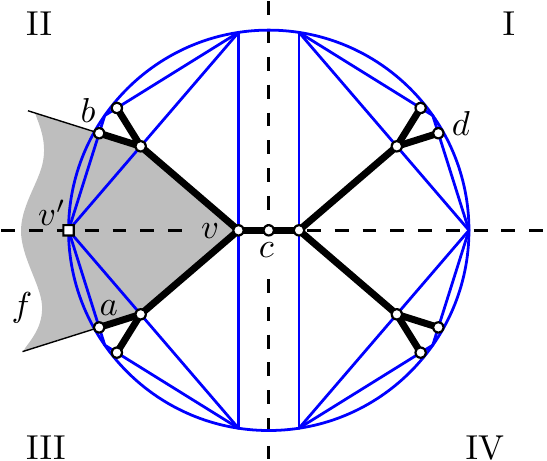}
    \caption{output of our algorithm for a tree of height~3}
    \label{fig:construction-final}
  \end{subfigure}
  \caption{Strongly monotone drawings of proper binary trees}
  \label{fig:construction}
\end{figure}

\begin{lemma}\label{lem:strict-conv-bintree-proof}
  For a proper binary tree rooted in a dummy vertex,
  the disk algorithm yields a strictly convex drawing.
\end{lemma}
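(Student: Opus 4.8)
The plan is to verify the two things demanded by the definition of strict convexity: that every vertex of the augmented drawing is a \emph{locally} convex corner (all incident interior angles strictly below~$\pi$), and that every (unbounded) face is \emph{globally} convex and the whole drawing is plane. The bulk of the work is a single local angle computation at the internal vertices, powered by one structural observation.

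The structural observation I would record first is that, for every non-root vertex~$v$, the edge from its parent~$p$ to~$v$ is collinear with the ray~$\vec v$ that the algorithm shoots from~$v$. Indeed, $v$ was placed by~$p$ as the foot of a perpendicular onto the chord~$s_v$, so $pv\perp s_v$; since $p$ lies on the side of~$s_v$ opposite to the segment~$D_v$, the ray~$\vec v$ (perpendicular to~$s_v$ and directed into~$D_v$) is exactly the straight continuation of~$pv$. The two children of the dummy root are the only exception, and there the common incident edge is the horizontal segment through~$c$, which again continues collinearly into each child's ray. Next I would carry out the angle computation at a vertex~$v$ of degree three using the right-triangle construction of Figure~\ref{fig:construction-ind}. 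Let $\alpha_v$ and $\beta_v$ be the two base angles of the triangle spanned by~$s_v$ and the two chords meeting at~$v'$; both lie in~$(0,\pi/2)$ and satisfy $\alpha_v+\beta_v<\pi$, because the third angle of that triangle, at~$v'$, is positive. Since each child edge is the altitude dropped from~$v$ onto the corresponding hypotenuse, a routine fact about right triangles gives that the edge to one child makes angle~$\alpha_v$ with~$\vec v$ and the edge to the other child makes angle~$\beta_v$ with~$\vec v$, on opposite sides. Combining this with the collinearity of~$pv$ and~$\vec v$, the three faces meeting at~$v$ receive interior angles $\alpha_v+\beta_v$, $\pi-\alpha_v$, and $\pi-\beta_v$, each strictly below~$\pi$. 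Leaves contribute no angle, since each leaf edge is replaced by a ray passing through the leaf, and the dummy root contributes no angle, since its two edges form a straight angle~$\pi$ and are treated as a single segment.

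It remains to pass from local to global convexity and to establish planarity. The preconditions maintained by the construction guarantee that the extended subtree of every vertex~$v$ is confined to its segment~$D_v$, that the segments of two siblings are disjoint, and that a child's segment is nested inside its parent's; hence distinct subtrees occupy disjoint regions, the drawing is plane, and every combinatorial face is realized as a simple, simply connected region. A simple, simply connected region all of whose boundary corners are locally convex is convex, and here strictly so, which yields strict convexity of the whole drawing; crossing-freeness then also follows from Lemma~\ref{lem:crossing-free}.

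The step I expect to be the main obstacle is precisely this last passage to global convexity for the \emph{unbounded} faces. Local convexity at the finitely many corners is not by itself enough to rule out that the two bounding rays of a face diverge by an angle of at least~$\pi$ and wrap around rather than enclosing an honest convex wedge. To close this gap I would prove, in analogy with requirement~\ref{enum:subtrees} of the grid algorithm, that the edge and ray directions occurring in any single extended subtree span an angle strictly less than~$\pi$; the total turning along each face boundary is then the signed sum of the base angles~$\alpha_\cdot,\beta_\cdot$ along its two descending paths and stays below~$\pi$, so each face is an unbounded strictly convex region. Establishing this angular-span bound from the shrinking nested circular segments is where the geometric care is concentrated, since the per-step turns~$\alpha_\cdot$ and~$\beta_\cdot$ do not individually control their cumulative effect and must be bounded using the confinement of each subtree to its cap.
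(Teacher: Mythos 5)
Your local analysis is correct and in fact more explicit than what the paper writes down: the observation that the parent edge of $v$ continues collinearly into the ray $\vec v$, and the computation of the three interior angles $\alpha_v+\beta_v$, $\pi-\alpha_v$, $\pi-\beta_v$ at an internal vertex, make precise the paper's brief remarks that one ``always turns right'' toward one leaf and ``always turns left'' toward the other, and that the two child edges of $v$ lie in a common half-plane bounded by $s_v$. The problem is that you stop exactly at the step you yourself identify as the crux. Local convexity at every corner plus planarity does not yet give convexity of the \emph{unbounded} faces, and your proposed fix --- an analogue of requirement~\ref{enum:subtrees} bounding the angular span of the edge and ray directions of an extended subtree by $\pi$ --- is announced but not proved. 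As written, the argument is therefore incomplete at its decisive point.

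The paper closes this gap with a much shorter observation that you could adopt: for a face $f$ with leaves $a$ and $b$ and lowest common ancestor $v$, the two unbounded rays $\vec a$ and $\vec b$ are, by construction, perpendicular to two chords of $C$ that share the endpoint $v'=\vec v\cap C$; two distinct chords meeting at a point of the circle enclose an angle less than $\pi$ there, so the two rays diverge, and together with the disjointness of the circular segments (which keeps the two bounding paths from crossing) and the local convexity you established, $f$ is strictly convex. Alternatively, the angular-span lemma you sketch is true and provable along the lines you hint at: every edge of the subtree below $v$, directed away from the root, is an outward normal of a chord whose arc is nested inside the arc $A_v$ bounding $D_v$, hence points from the center $c$ toward a point of $A_v$; since every $D_v$ is a minor segment (no chord of the construction passes through $c$), $A_v$ has angular measure less than $\pi$, which gives the bound you need. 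Either route completes the proof, but one of them has to be carried out rather than deferred.
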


\begin{proof}
  Let~$T$ be a proper binary tree and let~$f$ be a face of the
  drawing generated by the algorithm described above.  Clearly,~$f$ is
  unbounded.  Let~$a$ and~$b$ be the leaves of~$T$ that are incident
  to the two unbounded edges of~$f$, and let~$v$ be the lowest common
  ancestor of~$a$ and~$b$; see Figure~\ref{fig:construction-final}.
  Consider the two paths~$v\to a$ and~$v\to b$.  
  We assume that the path from~$v$ through its left child
  ends in~$a$ and the path through its right child ends in~$b$.

  Due to our inductive construction that uses disjoint disk sections
  for different subtrees, it is clear that the two paths do not
  intersect.  Moreover, each vertex on the two paths is convex, that
  is, the angle that such a vertex forms inside~$f$ is less
  than~$\pi$.  This is due to the fact that we always turn right when
  we go from~$v$ to~$a$, and we always turn left when we go to~$b$.
  Vertex~$v$ is also convex since the two edges from~$v$ to its
  children lie in the same half-plane (bounded by~$s_v$).
  
  It remains to show that the two rays~$\vec{a}$ and~$\vec{b}$
  (defined analogously to~$\vec{v}$ above) do not intersect.  To this
  end, recall that~$v'=\vec{v} \cap C$.  By our construction,~$\vec{a}$
  and~$\vec{b}$ are orthogonal to two chords of~$C$ that are
  both incident to~$v'$.  Clearly, the two chords form an angle of
  less than~$\pi$ in~$v'$.  Hence, the two rays diverge, and the
  face~$f$ is strictly convex.
\end{proof}
  
For the proof that the algorithm described above yields a strongly monotone 
drawing, we need the following tools. Let~$\vec v_1$ and~$\vec v_2$ be two 
vectors. We say that~$\vec v_3$ \emph{lies between}~$\vec v_1$ and~$\vec v_2$ 
if~$\vec v_3$ is a positive linear combination of~$\vec v_1$ and~$\vec v_2$.
For two vectors~$\vec v$ and~$\vec w$, we 
define~$\langle \vec v,\vec w\rangle=|\vec v||\vec w|\cos(\vec v,\vec w)$ as the 
scalar product of~$\vec v$ and~$\vec w$.

\begin{lemma}\label{lem:between}
  If a path~$p$ is monotone with respect to two vectors~$\vec v_1$ 
  and~$\vec v_2$, then it is monotone with respect to any vector~$\vec v_3$
  between~$\vec v_1$ and~$\vec v_2$.
\end{lemma}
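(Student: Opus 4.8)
The plan is to reduce monotonicity with respect to a direction to a simple sign condition on the individual edge vectors of the path, and then exploit the bilinearity of the scalar product. Write the path as $p = (p_0, p_1, \dots, p_m)$ and let $\vec e_i = p_i - p_{i-1}$ denote its edge vectors. The orthogonal projection of a vertex $p_i$ onto any line with direction $\vec v$ is, up to a common positive factor, given by $\langle p_i, \vec v\rangle$. Hence the projected vertices appear along the line in the order induced by $p$ precisely when $\langle p_0,\vec v\rangle < \langle p_1,\vec v\rangle < \dots < \langle p_m,\vec v\rangle$, which, taking consecutive differences, is equivalent to the per-edge condition $\langle \vec e_i, \vec v\rangle > 0$ for every $i \in \{1,\dots,m\}$. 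This characterization is the heart of the argument, and I would state it as the first step.

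Applying this characterization to the two given directions, the hypothesis that $p$ is monotone with respect to $\vec v_1$ and with respect to $\vec v_2$ means exactly that $\langle \vec e_i, \vec v_1\rangle > 0$ and $\langle \vec e_i, \vec v_2\rangle > 0$ hold for every edge $\vec e_i$ of $p$. Now let $\vec v_3$ lie between $\vec v_1$ and $\vec v_2$, so that $\vec v_3 = \lambda_1 \vec v_1 + \lambda_2 \vec v_2$ for some $\lambda_1, \lambda_2 > 0$ by the definition of \emph{lies between}. By linearity of the scalar product in its second argument,
\[
  \langle \vec e_i, \vec v_3\rangle
  = \lambda_1 \langle \vec e_i, \vec v_1\rangle + \lambda_2 \langle \vec e_i, \vec v_2\rangle > 0
\]
for every edge $\vec e_i$, since both summands are strictly positive. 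By the characterization from the first step, this establishes that $p$ is monotone with respect to $\vec v_3$, completing the proof.

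I do not expect any genuine obstacle here; the lemma is essentially an observation once the right viewpoint is adopted. The only point requiring a little care is making the characterization of monotonicity via the edge-wise sign condition precise (in particular fixing the orientation so that the inequalities point the correct way and handling the common positive scaling factor in the projection), but this is routine. Everything else follows immediately from the positivity of $\lambda_1, \lambda_2$ and the bilinearity of $\langle\cdot,\cdot\rangle$.
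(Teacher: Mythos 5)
Your proof is correct and follows essentially the same route as the paper's: both reduce monotonicity with respect to a direction to the edge-wise condition $\langle \vec e_i, \vec v\rangle > 0$ and then apply bilinearity of the scalar product together with the positivity of $\lambda_1,\lambda_2$. The only difference is that you spell out the equivalence between the projection-order definition and the per-edge sign condition, which the paper's proof takes as given.
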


\begin{proof}
  Let~$\vec v_3=\lambda_1 \vec v_1 + \lambda_2 \vec v_2$ 
  with~$\lambda_1,\lambda_2>0$. Assume that the path~$p$ is given by the 
  sequence of vectors $\vec w_1, \vec w_2, \ldots, \vec w_k$.
  Since~$p$ is monotone with respect to vectors~$\vec v_1$ and~$\vec v_2$, we 
  have that $\langle \vec v_1 , \vec w_i \rangle > 0$ and 
  $\langle \vec v_2 , \vec w_i \rangle > 0$ for all~$i\le k$. This yields, for 
  all~$i\le k$,
  $$ \langle \vec v_3 , \vec w_i \rangle 
  = \langle \lambda_1 \vec v_1 + \lambda_2 \vec v_2 , \vec w_i \rangle 
  =  \lambda_1 \langle \vec v_1 , \vec w_i \rangle  + \lambda_2 \langle \vec v_2 , \vec w_i \rangle 
  > 0, $$
  since~$\lambda_1,\lambda_2 >0$. It follows that~$p$ is monotone with respect 
  to~$\vec v_3$.
\end{proof}

\begin{lemma}\label{lem:strict-mon-bintree-proof}
  For a proper binary tree rooted in a dummy vertex,
  the disk algorithm yields a strongly monotone drawing.
\end{lemma}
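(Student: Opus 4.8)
The plan is to verify strong monotonicity directly from the scalar-product characterization of monotonicity used in the proof of Lemma~\ref{lem:between}: a path with directed edge vectors $\vec w_1,\dots,\vec w_k$ is monotone with respect to $\vec v$ exactly when $\langle \vec v,\vec w_i\rangle>0$ for all $i$. Fix two vertices $u,w$, let $t$ be their lowest common ancestor, let $t_l,t_r$ be the children of $t$ on the way to $u$ and $w$ respectively, and set $\vec{uw}=w-u$; I must show that every directed edge of the unique $u$--$w$ path has positive inner product with $\vec{uw}$. The first ingredient is structural: by construction each child $c$ is the foot of the perpendicular from its parent onto the chord $s_c$ delimiting $D_c$, and $D_c$ lies on the far side of $s_c$, so the parent-to-child edge is a \emph{positive multiple of the perpendicular direction~$\vec c$}. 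Reading the path from $u$ up to $t$ and then down to $w$, it therefore suffices to prove $\langle \vec y,\vec{uw}\rangle>0$ for every vertex $y\neq t$ on the $t\to w$ part (these edges are traversed forward, along $+\vec y$) and $\langle \vec x,\vec{uw}\rangle<0$ for every vertex $x\neq t$ on the $t\to u$ part (traversed backward, along $-\vec x$).

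For the $w$-side I would split $\langle \vec y,\vec{uw}\rangle=\langle \vec y,w-y\rangle+\langle \vec y,y-u\rangle$ and bound the two terms separately. The first term is positive because $w$ is a proper descendant of $y$ and hence lies in $D_y$, which is strictly on the $+\vec y$ side of the chord $s_y\ni y$; this is exactly the ``descendant lies ahead'' consequence of the preconditions~(i) and~(ii) that also drives Lemma~\ref{lem:strict-conv-bintree-proof}. The second term is nonnegative provided $u$ lies on the closed far side of $s_y$, i.e.\ $\langle \vec y,u-y\rangle\le 0$. Granting this, the sum is positive, so every forward edge projects positively onto $\vec{uw}$; the $u$-side is symmetric after exchanging the roles of $u$ and $w$.

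The main obstacle is the \emph{separation} claim that the entire opposite subtree sits on the far side of every chord $s_y$. I would prove it from the two invariants the algorithm maintains: the segments are \emph{nested}, $D_y\subseteq D_{t_r}$ for every $y$ on the $t\to w$ path, and the two segments at $t$ are \emph{disjoint}, $D_{t_l}\cap D_{t_r}=\emptyset$ (see Figure~\ref{fig:construction-ind}). Since $u\in\overline{D_{t_l}}$, it is disjoint from the cap $D_y\subseteq D_{t_r}$; as $D_y$ is precisely the part of the disk cut off on the $+\vec y$ side of the supporting line of $s_y$, the point $u$ must lie in the complementary major part, giving $\langle \vec y,u-y\rangle\le 0$. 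Strictness, which is needed for the deepest edge (into $w$, where the first term vanishes), follows because $\overline{D_{t_l}}$ and $\overline{D_{t_r}}$ meet only in the single circle point $v_t'$, which is not a vertex; hence $u\notin s_w$ and the inequality is strict there. Finally I would dispatch the degenerate cases: when $t$ is the dummy root its two children are collinear on the horizontal diameter, yet $D_{t_l}$ and $D_{t_r}$ are still disjoint and the edges are still positive multiples of $\vec{t_l},\vec{t_r}$, so the argument is unchanged; and when one of $u,w$ equals $t$ the corresponding half of the path is empty and only the other, already handled, half remains. Planarity is inherited from Lemma~\ref{lem:strict-conv-bintree-proof}, so the drawing is strongly monotone in the required crossing-free sense.
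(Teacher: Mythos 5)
Your proof is correct, but it takes a genuinely different route from the paper's. The paper splits the disk into four sectors, distinguishes three cases according to where $a$ and $b$ lie (normalizing by rotations and reflections), and in the hard case sandwiches the target direction: it introduces the lines $A_i$ through $a_i$ perpendicular to the incoming edge, translates them through $a$, takes the two extreme translates $A$ and $B$, verifies monotonicity with respect to $\vec A$ and $\vec B$, shows that $\overrightarrow{ab}$ lies between them, and closes with Lemma~\ref{lem:between}. You instead verify the defining inequality $\langle \vec e,\overrightarrow{uw}\rangle>0$ for every directed edge $\vec e$ of the path directly, via the split $w-u=(w-y)+(y-u)$, using that each edge is a positive multiple of the normal $\vec y$ of the chord $s_y$, that descendants of $y$ lie strictly inside $D_y$, and that the other query vertex lies in a disjoint segment and hence strictly on the other side of the supporting line of $s_y$. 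Both arguments ultimately rest on the same geometric separation fact (the paper states it as ``$a$ and $b$ lie on opposite sides of $A_i$''), but your deployment of it avoids the sector case analysis, the rotations, and Lemma~\ref{lem:between} altogether, which makes the argument shorter and checks strong monotonicity against $\overrightarrow{uw}$ itself rather than against two surrogate directions; the paper's sandwiching, in return, needs only coarse positional information about $b$ relative to two extreme lines. One small point to tighten: when $u=t$ (or $w=t$), your separation step cannot appeal to $u\in\overline{D_{t_l}}$ because $t_l$ does not enter the path; you need the (immediate) observation that $t$ itself lies strictly on the near side of $s_{t_r}$, being the apex from which the perpendicular foot $t_r$ on that chord is dropped. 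With that one-line addition the argument is complete.
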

  
\begin{proof}
  We split the drawing into four sectors: I, II, III and IV; see
  Figure~\ref{fig:construction-final}. Let~$a$ and~$b$ be two vertices in the 
  graph. We will show that the path~$a\to b$ in the
  output drawing of our algorithm is strongly monotone. 
  Let~$c$ be the root of the tree. Without loss of generality, assume that~$a$ 
  lies in sector III. Then, we distinguish three cases. 
  
  \textbf{Case 1}: $a$ and~$b$ lie on a common root--leaf path; see~$a$ and~$v$ 
    in Figure~\ref{fig:construction-final}. Obviously,~$b$ lies in sector III.
    Without loss of generality, assume that~$b$ lies on the path~$a\to c$.
    By construction, all edges in sector III, seen as vectors
    directed towards~$c$, lie between~$\vec x=(0,1)$ 
    and~$\vec y=(1,0)$. Thus, all edges on the path~$a\to b$, and in
    particular~$\vec{ab}$, lie between~$\vec x$ and~$\vec y$.
    Since~$\vec x$ is perpendicular to~$\vec y$, the path~$a\to b$ is
    monotone with respect to~$\vec x$ and~$\vec y$. Following
    Lemma~\ref{lem:between}, the path $a\to b$ is monotone with
    respect to~$\overrightarrow{ab}$, and thus strongly monotone.
  
  \textbf{Case 2}: $b$ lies in sector I; see~$a$ and~$d$ in
    Figure~\ref{fig:construction-final}. In Case 1, we have shown that
    the all edges on the path~$a\to c$ lie
    between~$\vec x=(0,1)$ and~$\vec y=(1,0)$. Analogously, the same holds
    for the path~$c\to b$. Thus, the path~$a\to b$ is monotone
    with respect to~$\vec x$ and~$\vec y$ and, following
    Lemma~\ref{lem:between}, strongly monotone.
  
    \textbf{Case 3}: $a$ and~$b$ do not lie on a common root--leaf
    path, and~$b$ does not lie in sector~I; see~$a$ and~$b$ in
    Figure~\ref{fig:construction-final}.  Let~$d$ be the lowest
    common ancestor of~$a$ and~$b$. Let~$\langle a=a_0,a_1,\dots,a_k=d\rangle$
    be the path~$d\to a$ to~$a$.
    Further, let~$\langle d=b_0,b_1,\dots,b_m=b\rangle$ be the path~$d\to b$. 
    Finally, let $p=\langle a_k,a_{k-1},\dots, a_0,b_1,\dots,b_{m-1},b_m\rangle$
    be the path~$a\to b$.
    
    \begin{figure}[t]
      \centering
      \includegraphics{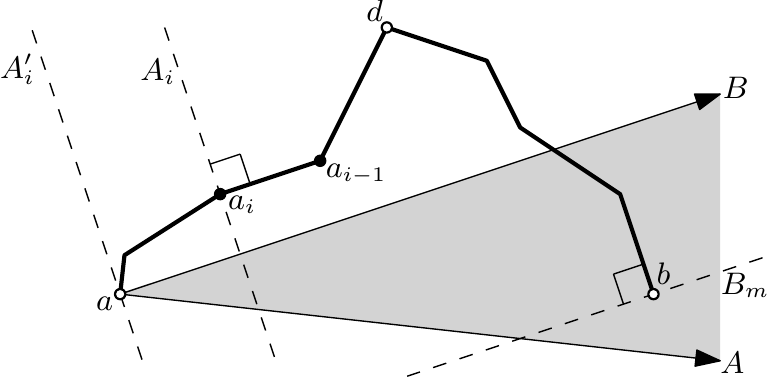}
      \caption{Illustration of case~3 in the proof of
        Lemma~\ref{lem:strict-mon-bintree-proof}}
      \label{fig:strict-mon-bintree-case3}
    \end{figure}
    
    Below, we describe how to rotate and mirror the drawing such that
    any vector~$\overrightarrow{a_ia_{i-1}}$ with $1\le i\le k$ lies
    between~$\vec x=(0,1)$ and~$\vec y=(1,0)$, and any
    vector~$\overrightarrow{b_{j-1}b_j}$ with $1\le j\le m$ lies
    between~$\vec x$ and~$-\vec y$. This statement is equivalent
    to~$x(a_i)<x(d)<x(b_j),1\le i\le k,1\le j\le m$
    and~$y(a_k)<\ldots<y(a_1)<y(d)>y(b_1)>\ldots>y(b_m)$;
    see Figure~\ref{fig:strict-mon-bintree-case3}.
    If~$b$ lies in sector~IV, then~$d=c$ and this statement is true by
    construction. If~$b$ lies in sector~II, then~$d$ is a child
    of~$c$. We rotate the drawing by~$\pi/2$ in counterclockwise
    direction and then mirror it horizontally.
    If~$b$ lies in sector~III, let~$p(d)$ be the parent of~$d$. 
    We rotate the drawing such that the edge~$(p(d),d)$ is drawn vertically. 
    Recall that, by construction, the ray from~$d$ in 
    direction~$\overrightarrow{p(d)d}=-\vec y$ separates the subtrees of the two  
    children of~$d$; see Figure~\ref{fig:construction-ind}. Further, the angle 
    between any edge (directed away from~$d$) in the subtree of~$d$ 
    and~$\overrightarrow{p(d)d}=-\vec y$ is at most~$\pi/2$, that is, they are 
    directed downwards.
   
    For $1 \le i \le k$, let~$A_i$ be the straight line through~$a_i$ perpendicular 
    to~$\overrightarrow{a_{i-1}a_i}$. Let~$A'_i$ be the line parallel to~$A_i$ 
    that passes through~$a$. Due to the $x$-monotonicity of~$p$, the point~$a$ 
    lies below~$A_i$. During the construction of the tree, the line~$A_i$
    defined a circular sector in which the subtree rooted at~$a_i$ including~$a$ 
    was exclusively drawn. It follows that~$a$ and~$b$ lie on opposite sites 
    of~$A_i$. Thus,~$b$ lies above~$A_i$ and also above~$A'_i$. 
    Let~$B_j,1\le j\le m$ be the straight line through~$b_j$ perpendicular 
    to~$\overrightarrow{b_{j-1}b_j}$. Let~$B'_j$ be the parallel line to~$B_j$ 
    that passes through~$a$. By construction,~$b$ lies below~$B_j$ and~$a$ lies
    above~$B_j$. Thus,~$b$ lies below~$B_j'$.
    
    Let~$A$ be the line~$A_i'$ with maximum slope and let~$B$ be the line~$B_j'$
    with minimum slope. First, we will show that the path is monotone with 
    respect to the unit vector~$\vec A$ on~$A$ directed to the right. By our choice 
    of~$A$, the angle between $\vec A$ and any vector~$\overrightarrow{a_ia_{i-1}}$
    with $1\le i\le k$
    is at most~$\pi/2$. Recall that any vector~$\overrightarrow{a_ia_{i-1}}$
    with $1\le i\le k$ lies between~$\vec x$ and~$\vec y$. Since~$\vec A$ is 
    perpendicular to one of these edges and directed to the right, it lies 
    between~$\vec x$ and~$-\vec y$. Since any 
    vector~$\overrightarrow{b_{j-1},b_j}$ with $1\le j\le m$ also lies 
    between~$\vec x$ and~$-\vec y$, the angle between~$\vec A$ and
    any such vector~$\overrightarrow{b_{j-1},b_j}$
    is also at most~$\pi/2$. Thus, the angle between~$\vec A$ and any edge on the
    path~$a\to b$ is at most~$\pi/2$, which shows that the path is monotone with respect 
    to~$\vec A$.
    
    Analogously, it can be shown that the path is monotone with respect 
    to~$\vec B$. Recall that~$b$ lies above~$A$ and below~$B$ and 
    that~$a$ lies above~$B$ and below~$A$.  Hence, the 
    vector~$\overrightarrow{ab}$ lies between~$\vec A$ and~$\vec B$. Following 
    Lemma~\ref{lem:between}, the path is monotone with respect 
    to~$\overrightarrow{ab}$ and, thus, strongly monotone.
\end{proof}

Lemmas~\ref{lem:strict-conv-bintree-proof} 
and~\ref{lem:strict-mon-bintree-proof} immediately imply the following.

\begin{theorem}\label{thm:strict-mon-bintree-proof}
  Any proper binary tree rooted in a dummy vertex
  has a strongly monotone and strictly convex drawing.
\end{theorem}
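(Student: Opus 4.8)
The plan is immediate: this theorem is nothing more than the conjunction of the two lemmas just proved, both of which speak about the very same object, namely the drawing produced by the disk algorithm on a proper binary tree rooted in a dummy vertex. Since the disk algorithm is deterministic and yields a single drawing~$\Gamma$ for any such tree, I would first fix~$\Gamma$ to be this output and keep it fixed throughout.

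Next I would simply invoke the two results. Lemma~\ref{lem:strict-conv-bintree-proof} gives that~$\Gamma$ is strictly convex, and Lemma~\ref{lem:strict-mon-bintree-proof} gives that~$\Gamma$ is strongly monotone. Because both lemmas are stated verbatim for ``the disk algorithm'' applied to ``a proper binary tree rooted in a dummy vertex,'' they refer to the identical drawing~$\Gamma$; no reconciliation between two differently-constructed drawings is required, and the two properties may be asserted of~$\Gamma$ simultaneously. The theorem follows at once.

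I do not expect any genuine obstacle here, since all the substance has already been discharged in the two lemmas: strict convexity rests on the disjointness of the disk sectors assigned to distinct subtrees together with the consistent right/left turning along the two root--leaf paths bounding a face, while strong monotonicity rests on the four-sector case analysis and the betweenness argument of Lemma~\ref{lem:between}. The only thing one must verify is the bookkeeping point that both lemmas really do describe the same~$\Gamma$ rather than two constructions sharing a name; inspecting their hypotheses confirms this, so the combination is automatic.
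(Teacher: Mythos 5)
Your proposal matches the paper exactly: the paper states that Lemmas~\ref{lem:strict-conv-bintree-proof} and~\ref{lem:strict-mon-bintree-proof} immediately imply the theorem, since both lemmas concern the single drawing produced by the disk algorithm. Your additional remark that one should check both lemmas describe the same drawing is a correct (if routine) observation, and the combination is indeed automatic.
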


Next, we (partially) extend this result to arbitrary trees.

\begin{theorem}\label{thm:strict-mon-tree-proof}
  Any tree 
  has a strongly monotone 
  drawing.
\end{theorem}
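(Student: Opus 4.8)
The plan is to reduce the general case to the proper-binary case of Theorem~\ref{thm:strict-mon-bintree-proof}. First I would transform the given tree~$T$ into a proper binary tree~$\hat T$ without disturbing the path structure on the original vertices. Rooting $T$ as required by the disk algorithm (with a dummy root of degree two), I would patch up the two bad vertex types: every internal vertex with a single child receives one extra dummy leaf, and every vertex~$v$ with children~$c_1,\dots,c_k$ and $k>2$ is \emph{binarized}, that is, replaced by a path~$v=u_1,u_2,\dots,u_{k-1}$ of dummy vertices in which~$u_i$ carries~$c_i$ as one child and the next spine vertex (or~$c_k$) as the other. Now every internal vertex of~$\hat T$ has exactly two children, so the disk algorithm applies and produces a strongly monotone, strictly convex drawing~$\hat\Gamma$ of~$\hat T$ by Theorem~\ref{thm:strict-mon-bintree-proof}.

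Second I would turn~$\hat\Gamma$ into a drawing~$\Gamma$ of~$T$. The dummy leaves are simply deleted: a dummy leaf never occurs as an interior vertex of a path between two original vertices, so the path $a\to b$ in~$T$ uses exactly the same edges as in~$\hat T$, and its monotonicity with respect to~$\overrightarrow{ab}$ is inherited verbatim; the same remark shows that the degree-two vertices of~$T$ cause no trouble. The binarization spines are \emph{contracted}: for each high-degree vertex~$v$ I would run the disk construction with the spine edges set to length zero, placing all of~$u_1,\dots,u_{k-1}$ at the common point~$v$. This is legitimate because a disk step only needs each child to lie in the relative interior of its chord inside a dedicated, pairwise-disjoint, empty circular segment, and the recursive splitting of~$D_v$ into $k$ such sub-segments can be performed with the $k-1$ splitting apices collapsed to~$v$. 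The outcome is an honest drawing~$\Gamma$ of~$T$ in which the $k$ edges at~$v$ emanate into $k$ disjoint empty sub-segments of~$D_v$, and each edge~$(v,c_i)$ is perpendicular to the chord~$s_{c_i}$ delimiting the sub-segment of~$c_i$.

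Third I would re-establish strong monotonicity for~$\Gamma$ by reusing the three-case analysis of Lemma~\ref{lem:strict-mon-bintree-proof}. The two structural facts that drive that proof should survive the generalization: the \emph{confinement} property, that the entire subtree rooted at a child~$c_i$ lies in the half-plane beyond~$s_{c_i}$, still holds since $(v,c_i)\perp s_{c_i}$ by construction; and for every root--leaf path the edges, directed toward the root, should still lie in a right-angle cone, which is exactly what lets one choose the separating directions~$\vec A$ and~$\vec B$, place~$\overrightarrow{ab}$ between them, and conclude via Lemma~\ref{lem:between}.

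The main obstacle is this last cone property. At a high-degree vertex the edges to the children fan out across~$D_v$, and a~priori their directions could span an angle larger than the~$\pi/2$ relied upon in the binary proof, which would break the step ``all edges lie between~$\vec x$ and~$\vec y$''. The crux is therefore to control this angular spread under repeated nesting of circular segments: I would try to realize the $k$-way split of~$D_v$ so that every child edge still makes an angle at most~$\pi/4$ with the perpendicular ray~$\vec v$, keeping each subtree inside a~$\pi/2$ cone; and if that cannot be guaranteed for large~$k$, to rework Case~3 using the per-edge perpendicular lines~$A_i$ and~$B_j$ directly instead of one global cone. I expect the angle bookkeeping across nested sub-segments to need the most care, whereas the deletion of dummy leaves and the contraction of the binarization spines are routine once this geometric picture is fixed.
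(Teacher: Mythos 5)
Your reduction to the proper-binary case matches the paper's (root at a degree-2 vertex, attach a dummy leaf to each remaining degree-2 vertex, binarize each high-degree vertex into a spine of dummy vertices), but the way you go back from the binary drawing to a drawing of~$T$ is where the argument breaks down, and you yourself flag the unresolved obstacle. Contracting the spine edges to length zero is not a cosmetic change to the disk algorithm: a spine vertex placed at~$v$ no longer lies in the relative interior of the chord that is supposed to delimit its circular segment, so the inductive step cannot be ``run with the spine edges set to length zero''; you would in effect be designing a new $k$-ary disk step and would then have to reprove the analogue of Lemma~\ref{lem:strict-mon-bintree-proof} for it. As you observe, the $\pi/2$-cone property that drives Cases~1--3 is genuinely at risk there (perpendiculars from~$v$ to $k$ nested chords can fan out over an angle approaching~$\pi$), and your proposal ends with ``I would try to realize the split so that every child edge makes angle at most $\pi/4$ with $\vec v$, and if not, rework Case~3'' --- that is precisely the part that is missing, so the proof is not complete.

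The idea you are missing is that no contraction is needed. The paper keeps the binary drawing $\Gamma_{T'}$ exactly as produced by the disk algorithm, deletes the dummy leaves, and replaces each original edge $(v,w_i)$ by the straight-line segment between the (unchanged) positions of $v=v_1$ and $w_i$, i.e., it \emph{shortcuts} the spine path $\langle v_1,\dots,v_i,w_i\rangle$. Planarity holds because these shortcuts form stars drawn in pairwise disjoint regions. Strong monotonicity is then inherited for free: if a path in $\Gamma_{T'}$ is monotone with respect to a direction $\vec d$, every edge vector on it has positive inner product with $\vec d$; the shortcut vector $\overrightarrow{vw_i}$ is the sum (a positive linear combination) of the edge vectors it replaces, so it too has positive inner product with $\vec d$. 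Hence any witness direction for $a\to b$ in $\Gamma_{T'}$, in particular $\overrightarrow{ab}$ itself (which is the same vector in both drawings since $a$ and $b$ keep their positions), witnesses monotonicity of the shortcut path in $\Gamma_T$. This one-line ``shortcutting preserves monotonicity'' observation is what lets the binary case carry over verbatim and is the step your proposal needs to replace its contraction argument.
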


\begin{proof}
  Let~$T$ be a tree. If~$T$ has a vertex of degree~2, we root~$T$ in this 
  vertex. Otherwise, we subdivide any edge by creating a vertex of degree~2, 
  which we pick as root. Then, we add a leaf to every vertex of degree~2, 
  except the root. Now, let~$v$ be a vertex with out-degree~$k>2$. 
  Let~$(v,w_1),\dots,(v,w_k)$ be the outgoing edges of~$v$ ordered from right to 
  left. We substitute~$v$ by a path~$\langle v=v_1,\dots,v_{k+1}\rangle$, 
  where~$v_{i+1}$ is the left child of~$v_i$, for $i=1,\dots,k$. Then, we 
  substitute the edges~$(v,w_i)$ by~$(v_i,w_i),i=2,\dots,k$; see 
  Figure~\ref{fig:subdivision}.
  
  \begin{figure}[tb]
    \hfill
    \begin{subfigure}[b]{.45\textwidth}
      \centering
      \includegraphics[page=1]{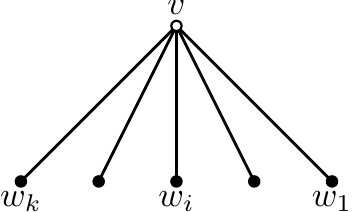}
      \caption{a tree~$T$}
      \label{fig:subdivision-1}
    \end{subfigure}
    \hfill
    \begin{subfigure}[b]{.45\textwidth}
      \centering
      \includegraphics[page=2]{subdivision}
      \caption{the subdivided tree~$T'$}
      \label{fig:subdivision-2}
    \end{subfigure}
    \hfill
    \caption{Subdivision of a vertex~$v$ with~$k$ outgoing edges}
    \label{fig:subdivision}
  \end{figure}
   
  Let~$T'$ be the resulting binary tree.  Clearly, all vertices of~$T'$, 
  except its root, have degree~1 or~3, so~$T'$ is a proper binary tree.
  We use Theorem~\ref{thm:strict-mon-bintree-proof} to get a strongly monotone
  drawing~$\Gamma_{T'}$ of~$T'$. Then, we remove the dummy vertices inserted
  above and draw as straight-line segments the edges of the original
  tree~$T$ that have been substituted or subdivided.
  This yields a drawing~$\Gamma_T$ of~$T$ that is crossing-free
  since the only new edges form a set of stars that are drawn in disjoint
  areas of the drawing.
  
  Now, we show that~$\Gamma_T$ is strongly monotone. Let~$(v,w)$ be an edge
  in~$T$. Let~$p=\langle v=v_1,\dots,v_m=w\rangle$ be the path~$v\to w$ in~$T'$.
  Suppose~$p$ is monotone with respect to some direction~$\vec d$. 
  Thus,~$\angle\{\overrightarrow{v_iv_{i+1}},\vec d\}<\pi/2$ 
  for~$1\le i\le m-1$. 
  Clearly,~$\overrightarrow{vw}=\sum_{i=1}^{m-1}\overrightarrow{v_iv_{i+1}}$
  is a positive linear combination
  of~$\overrightarrow{v_1v_2},\dots,\overrightarrow{v_{m-1}v_m}$
  and, hence,~$\angle\{\overrightarrow{vw},\vec d\}<\pi/2$.
  It follows that the path~$a\to b$ for some vertices~$a,b$ in~$T$ is monotone
  with respect to a direction~$\vec d$ in~$\Gamma_T$ if the path~$a\to b$ is
  monotone to~$\vec d$ in~$\Gamma_{T'}$. With~$\vec d=\overrightarrow{ab}$, 
  it follows that~$\Gamma_T$ is strongly monotone.
\end{proof}

We add to this another positive result concerning biconnected
outerplanar graphs.

\begin{theorem}
  Any biconnected outerplanar graph has a strongly monotone and
  strictly convex drawing on the $O(n^{3/2})\times O(n^{3/2})$ grid.
\end{theorem}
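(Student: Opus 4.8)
The plan is to realize the outer (Hamiltonian) cycle of the graph as a strictly convex polygon whose vertices lie \emph{approximately on a circle}, and to draw every chord as a straight segment. First I would observe that this placement immediately yields strict convexity and planarity: since the embedding is outerplanar, the chords are pairwise non-crossing, and with the vertices in strictly convex position every internal face is the convex hull of a cyclically consecutive subset of the vertices and hence strictly convex, while the outer cycle bounds a strictly convex polygon. So the only real content is strong monotonicity, and since the cycle edges belong to the graph it suffices to show that for every pair $a,b$ of vertices one of the two arcs of the outer cycle is monotone with respect to $\overrightarrow{ab}$ (the present chords can only create additional monotone paths, never destroy the ones we exhibit).

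The monotonicity of an arc I would reduce to a statement about edge directions: an arc is monotone with respect to $\overrightarrow{ab}$ exactly when every one of its edges has positive inner product with $\overrightarrow{ab}$. On a true circle this holds for the shorter of the two arcs: if $a$ and $b$ sit at angular positions $\psi\pm\delta$ with $2\delta\le\pi$, then the orthogonal projection of a boundary vertex at angle $\theta$ onto $\overrightarrow{ab}$ equals $-\sin(\theta-\psi)$ up to a positive scaling, which is strictly monotone for $\theta\in[\psi-\delta,\psi+\delta]$ because that interval is contained in $[\psi-\pi/2,\psi+\pi/2]$. Thus on the circle every pair has a monotone arc and the drawing is strongly monotone, the witnessing direction being $\overrightarrow{ab}$ itself; Lemma~\ref{lem:between} is the tool that lets this argument survive small angular perturbations of the edge directions.

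To obtain the grid bound I would discretize the circle using the primitive vectors of Section~\ref{sec:primitive}. Concretely, I would select $n$ primitive vectors of nearly equal length $\Theta(\sqrt n)$ and nearly uniformly spread directions -- an annulus of radii $\Theta(\sqrt n)$ contains $\Omega(n)$ primitive vectors, so such a choice exists -- together with their reflections, sort them by slope, and take them, in this circular order, as the edge vectors of the polygon. By Lemma~\ref{lem:primitive_angles} consecutive slopes differ by $\Omega(1/n)$, so the turning angle at every vertex is strictly positive and the polygon is strictly convex; the $n$ edges of length $\Theta(\sqrt n)$ keep the diameter at $O(n^{3/2})$, so the whole drawing fits on an $O(n^{3/2})\times O(n^{3/2})$ grid, and the convex-face argument of Lemma~\ref{lem:crossing-free} guarantees planarity.

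The hard part is verifying that the shorter-arc property of the circle persists for this grid polygon. For a pair $a,b$ whose arc spans an angular range well below a semicircle, the chord direction $\overrightarrow{ab}$ is a positive combination of the arc's edge vectors and hence lies inside their narrow angular range, so it is within $\pi/2$ of each edge and that arc is monotone. The delicate case is a pair of \emph{nearly antipodal} vertices, whose shorter arc spans almost a semicircle: here the admissible window for $\overrightarrow{ab}$ (the directions within $\pi/2$ of \emph{all} edges of the arc) is only as wide as the angular defect, so I must show that $\overrightarrow{ab}$ stays essentially centered in the arc's angular range. This is exactly where the near-uniform choice of lengths and directions enters: it keeps the edge vectors balanced about the midpoint direction, so that their sum $\overrightarrow{ab}$ cannot skew toward either end of the range by more than the defect, with the $\Omega(1/n)$ spacing of Lemma~\ref{lem:primitive_angles} controlling the discretization error. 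Establishing this centering estimate for the antipodal case -- reconciling the balance that monotonicity demands with the unavoidably non-uniform edge lengths of a convex $n$-gon on an $O(n^{3/2})$ grid -- is the main obstacle, and I expect the bulk of the work to lie there.
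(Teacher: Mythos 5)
Your high-level plan---put the outer cycle in strictly convex position using primitive vectors and invoke Lemma~\ref{lem:between}---is the right one, and your treatment of planarity, strict convexity, and the grid bound is fine. But the specific realization you chose, a closed polygon approximating a \emph{full} circle, manufactures exactly the difficulty you flag in your last paragraph and then leave unresolved. For a nearly antipodal pair $a,b$, the edges at the two ends of either arc are within $O(1/n)$ of being perpendicular to $\overrightarrow{ab}$, so the available slack is only $O(1/n)$. Meanwhile the chord $\overrightarrow{ab}$ is the (length-)weighted sum of the arc's edge vectors, and primitive vectors in an annulus of radius $\Theta(\sqrt n)$ have lengths that vary by constant factors; a constant-factor length imbalance between the two halves of the arc rotates the chord direction by a constant angle, which overwhelms the $O(1/n)$ slack. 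Your claim that the sum ``cannot skew toward either end of the range by more than the defect'' is therefore not a routine discretization estimate but the crux of the proof, it is unsupported, and for an arbitrary slope-sorted selection of grid primitive vectors it is false. This is a genuine gap.

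The paper avoids the antipodal case entirely by not closing the curve around a full turn. It places $v_1,\dots,v_n$ on an \emph{open} convex chain that is both $x$- and $y$-monotone: the $n-1$ edge vectors are primitive vectors with coordinates $O(\sqrt n)$, sorted by slope, all lying strictly inside the first quadrant; the single edge $v_nv_1$ closes the outer cycle. Then every chord $\overrightarrow{v_iv_j}$ with $j>i$ is a positive combination of first-quadrant vectors and hence lies between $(0,1)$ and $(1,0)$, while the boundary path $v_i\to v_{i+1}\to\dots\to v_j$ is monotone with respect to both $(1,0)$ and $(0,1)$; Lemma~\ref{lem:between} then gives monotonicity with respect to $\overrightarrow{v_iv_j}$ immediately, with no balancing argument and with a uniform $\pi/2$ margin instead of an $O(1/n)$ one. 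If you restrict your construction from a full turn to a quarter turn of directions, your argument goes through essentially verbatim with the same $O(n^{3/2})\times O(n^{3/2})$ bound.
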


\begin{proof}
  Let~$G$ be a biconnected outerplanar graph with outer cycle
  $\langle v_1,\dots,v_n,v_1\rangle$.
  We place the vertices~$v_1,\dots,v_{n}$
  in order on an $x$- and $y$-monotone convex chain~$C$ that
  has~$v_1$ and~$v_n$ as its endpoints. The chain is assembled by
  translations of $n-1$ primitive vectors, which are sorted by slope
  (see Figure~\ref{fig:biconnected-outerplanar} for a sketch).
  Since the outer cycle is drawn strictly convex, the drawing is planar
  and strictly convex.
  Also, every vector~$\overrightarrow{v_iv_j},j>i$ lies between~$\vec x=(0,1)$
  and~$\vec y=(1,0)$. Thus, by Lemma~\ref{lem:between}, the drawing is
  strongly monotone. 
  
  For our construction we can use a set of
  primitive vectors whose coordinates are bounded by $O(\sqrt{n})$.
  Since we have linked $n$ such vectors, the
  asserted bound follows.
\end{proof}

We close with a negative result.  Angelini et
al.~\cite[p.~33]{acbfp-mdg-12} stated that they are ``not aware of any
planar graph not admitting a planar monotone drawing for any of its
embeddings''.  We give the first family of graphs that
  do not admit any \emph{strongly} monotone drawing.  Note that the
graphs in the family that we construct are neither outerplanar nor
biconnected.

\begin{theorem}
  There is an infinite family of connected planar graphs
  that do not have a strongly monotone drawing in any combinatorial embedding.
\end{theorem}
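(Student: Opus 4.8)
The plan is to expose a purely \emph{geometric} obstruction, one that never refers to a fixed rotation system and hence rules out a strongly monotone drawing for every embedding at once. The family I propose is the graph~$G_n$ obtained from a $K_4$ on vertices $\{v_1,v_2,v_3,v_4\}$ by attaching $n$ pendant leaves to each of the four vertices. Every~$G_n$ is connected and planar, it contains~$K_4$ as a subgraph (so it has a $K_4$ minor and is therefore not outerplanar), and it has leaves, so each $v_i$ is a cut vertex and the graph is not biconnected; this matches the properties claimed in the statement, and letting $n$ range over the positive integers yields infinitely many graphs.

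The first step is a short structural lemma about leaves. Suppose a strongly monotone drawing exists and let~$\ell$ be a leaf with unique neighbor~$u$. Every path leaving~$\ell$ begins with the single edge~$\overrightarrow{\ell u}$, and for any other vertex~$w$ the chosen $\ell$-to-$w$ path must be monotone with respect to~$\overrightarrow{\ell w}$. Monotonicity forces the projection of the second vertex~$u$ to exceed that of~$\ell$, i.e.\ $\langle u-\ell,\; w-\ell\rangle>0$ for \emph{every} vertex $w\neq\ell$. Equivalently, all vertices lie strictly in the open half-plane bounded by the line through~$\ell$ orthogonal to~$\ell u$ on the side of~$u$. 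I would isolate this as a lemma.

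The geometric core is then immediate. In any crossing-free straight-line drawing of~$K_4$ the four vertices cannot be in convex position, for otherwise the two diagonals of the quadrilateral would cross; hence one vertex, say~$v_4$, lies strictly inside the triangle $\triangle v_1v_2v_3$. (Throughout I may assume general position, since strong monotonicity is defined by strict inequalities and therefore persists under a sufficiently small perturbation that also keeps the drawing planar.) Any pendant leaf~$\ell$ at~$v_4$ is joined to~$v_4$ by an edge that cannot cross the $K_4$-edges $v_1v_2,v_2v_3,v_3v_1$, so~$\ell$ too lies strictly inside $\triangle v_1v_2v_3$. Writing $\ell=\alpha_1 v_1+\alpha_2 v_2+\alpha_3 v_3$ with $\alpha_i>0$ and $\sum_i\alpha_i=1$, and putting $\vec n:=u-\ell=v_4-\ell$, we get $\sum_i\alpha_i\langle \vec n,\,v_i-\ell\rangle=\langle \vec n,\,\ell-\ell\rangle=0$, so the three numbers $\langle \vec n,\,v_i-\ell\rangle$ cannot all be positive. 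This contradicts the leaf lemma applied to~$\ell$ with $w\in\{v_1,v_2,v_3\}$, so no strongly monotone drawing of~$G_n$ exists.

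Finally I would assemble the pieces and check robustness. Because we do not control which $K_4$-vertex ends up enclosed, leaves are attached to \emph{all four} vertices, guaranteeing that whichever vertex is interior carries a trapped leaf; this is precisely why the obstruction survives every combinatorial embedding, and indeed the argument never mentions one. The same reasoning goes through verbatim if each pendant edge is subdivided, giving an alternative way to enlarge the graphs. I expect the only delicate points to be the two ``trapping'' claims made degeneracy-proof: that some $K_4$-vertex is \emph{strictly} interior in every crossing-free drawing, and that the leaf attached to it is consequently forced into the \emph{open} interior of the triangle (so that all barycentric weights are strictly positive); together with the remark that the open, strict nature of the monotonicity conditions lets us reduce to general position without loss of generality.
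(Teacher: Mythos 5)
Your proposal is correct and follows essentially the same route as the paper: the same family ($K_4$ with pendant leaves at every vertex), the same observation that every path from a leaf starts with its unique edge so that edge must make an acute angle with the direction to every other vertex, and the same contradiction from the leaf being trapped inside a triangle of $K_4$-vertices. The only cosmetic difference is that you derive the final contradiction via barycentric coordinates rather than the paper's angle-sum argument.
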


\begin{proof}
  Let~$C$ be the graph that arises by attaching to each vertex
  of~$K_4$ a ``leaf''; see Figure~\ref{fig:nostrict-connected}.
    \begin{figure}[tb]
    \begin{minipage}[t]{.5\columnwidth}
      \centering
      \includegraphics{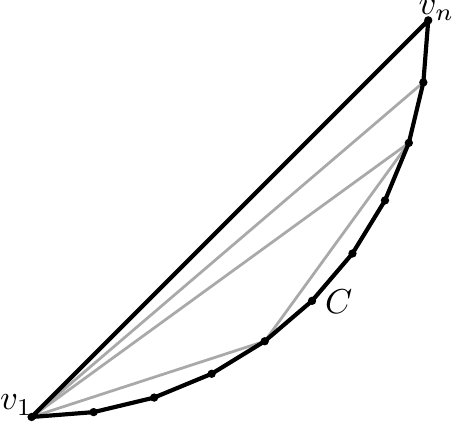}
      \caption{A strongly monotone drawing of a biconnected outerplanar graph}
      \label{fig:biconnected-outerplanar}
    \end{minipage}
    \hfill
    \begin{minipage}[t]{.43\columnwidth}
      \centering
      \includegraphics{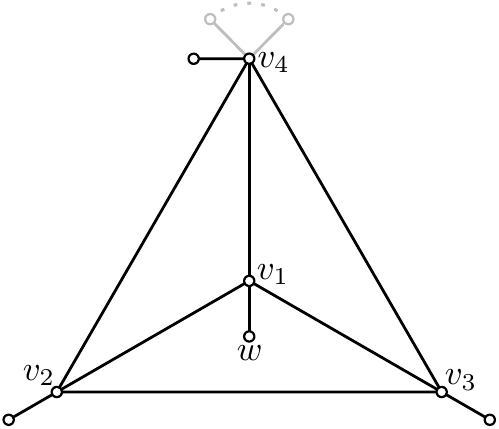}
      \caption{A planar graph without any strongly monotone drawing}
      \label{fig:nostrict-connected}
    \end{minipage}
  \end{figure}
  Let~$v_1,\dots,v_4$ be the vertices of~$K_4$. For~$K_4$ to be
  crossing-free, one of its vertices, say~$v_1$, lies in the interior.
  Let~$w$ be the leaf incident to~$v_1$. 
  Because of planarity,~$w$ has to be placed inside
  a triangular face incident to~$v_1$. Without loss of generality, assume
  that~$w$ is placed in the face~$(v_1,v_2,v_3)$. If the drawing is
  strongly monotone, 
  then~$\angle(\overrightarrow{wv_2},\overrightarrow{wv_1})<\pi/2$
  and~$\angle(\overrightarrow{wv_1},\overrightarrow{wv_3})<\pi/2$,
  and thus~$\angle(\overrightarrow{wv_3},\overrightarrow{wv_2})>\pi$.
  However, this means that~$w$ does not lie inside the triangle~$(v_1,v_2,v_3)$,
  which is a contradiction to the assumption. Thus,~$C$ does not have
  a strongly monotone drawing in any combinatorial embedding.
  We create an infinite family from~$C$ by adding more leaves to the
  vertices of~$K_4$. 
\end{proof}

\section{Conclusion}
\label{sec:open}

We have shown that any tree has a convex monotone drawing on a grid with area
$O(n^3)$ and a strongly monotone drawing, but several problems remain open.
It is an open question whether any tree has a strongly monotone drawing on a grid 
of exponential size. We have shown that not every connected planar graph admits a strongly
monotone drawing, while Felsner et al.~\cite{fikkms-smdpg-socg16} showed that every
triconnected planar graph does so.
It is still open whether there is a biconnected planar 
graph that does not have any strongly monotone drawing. If yes, it is
interesting whether this can be tested efficiently. 

\bibliographystyle{abbrvurl}
\bibliography{abbrv,monotone}

\end{document}